\newtheorem{theorem}{Theorem}[section]
\newtheorem{lemma}[theorem]{Lemma}
\newcommand{\average}[1]{\ensuremath{\{\!\{#1\}\!\}} }
\newcommand{\jump}[1]{\ensuremath{[\![#1]\!]} }
\title{Thermodynamic consistency and structure-preservation in summation by parts methods for the moist compressible Euler equations}
\author[1]{Kieran Ricardo}
\author[2]{David Lee}
\author[1,3]{Kenneth Duru}
\affil[1]{Mathematical Sciences Institute, Australian National University, Canberra, Australia}
\affil[2]{Bureau of Meteorology, Melbourne, Australia}
\affil[3]{Department of Mathematical Sciences , University of Texas at El Paso, USA}
\date{} 					
\begin{document}
\maketitle

\begin{abstract}
Moist thermodynamics is a fundamental driver of atmospheric dynamics across all scales, making accurate modeling of these processes essential for reliable weather forecasts and climate change projections. However, atmospheric models often make a variety of inconsistent approximations in representing moist thermodynamics. These inconsistencies can introduce spurious sources and sinks of energy, potentially compromising the integrity of the models.

Here, we present a thermodynamically consistent and structure preserving formulation of the moist compressible Euler equations. When discretised with  a summation by parts method, our spatial discretisation conserves: mass, water, entropy, and energy. These properties are achieved by discretising a skew symmetric form of the moist compressible Euler equations, using entropy as a prognostic variable, and the summation-by-parts property of discrete derivative operators. Additionally, we derive a discontinuous Galerkin spectral element method with energy and tracer variance stable numerical fluxes, and experimentally verify our theoretical results through numerical simulations.
\end{abstract}


\section{Introduction}

Atmospheric models typically utilise multiple different, and often inconsistent, thermodynamic approximations throughout a single code. These thermodynamic inconsistencies introduce resolution independent spurious sources and sinks of energy and hence violate the first law of thermodynamics. For climate models these energy errors can be significant, and are believed to adversely affect the integrity of long-time climate simulations 
\cite{lauritzen2018ncar, eldred2022thermodynamically}.

To improve the energy budgets of global climate models, recent studies have explored the application of thermodynamic potentials to ensure thermodynamic consistency \cite{thuburn2017use, staniforth2019forms, eldred2022thermodynamically}. This approach focuses on approximating a single thermodynamic quantity—known as the thermodynamic potential—and deriving all other relevant quantities from it. The most commonly employed potentials in this context are internal energy and the Gibbs function. For a comprehensive overview of thermodynamic potentials we  refer the reader to \cite{staniforth2019forms} and \cite{eldred2022thermodynamically}.

The thermodynamic potential approach was first proposed in \cite{thuburn2017use}, which introduces a thermodynamically consistent semi-Lagrangian model for a two-phase liquid-vapor system utilizing the Gibbs potential as a function of its natural variables: pressure and temperature. However, extending this method to include a third ice phase poses challenges, as the Gibbs potential does not uniquely define an equilibrium state at the triple point.

To address this issue, \cite{bowen2022consistent1, bowen2022consistent2} explore an alternative approach that employs internal energy as the potential in a semi-Lagrangian discretization of the three-phase moist compressible Euler equations, effectively sidestepping the triple point ambiguity associated with the Gibbs potential. Similarly, \cite{guba2024energy} presents a thermodynamically consistent method that also uses internal energy as the potential, implemented within a horizontally semi-Lagrangian and vertically spectral element discretization. While these methods ensure thermodynamic consistency and therefore discretize an energy-conserving set of equations, it is important to note that the discretizations themselves may not guarantee energy conservation or stability.

To the best of their knowledge, the authors are not aware of any mimetic or structure-preserving spatial discretizations for the moist compressible Euler equations, however it is noteworthy that significant progress has been made in developing such methods for the dry compressible Euler equations, as well as for related shallow water and thermal shallow water equations \cite{mcrae2014energy, lee2018discrete, lee2020mixed, taylor2010compatible, eldred2019quasi, natale2016compatible, cotter2023compatible}. These advancements encompass a variety of approaches, including mixed finite element, continuous Galerkin, and discontinuous Galerkin methods. Although the specifics differ among these techniques and equations, they all maintain structure preservation by discretising the equations in a vector-invariant skew-symmetric form and utilizing summation-by-parts (SBP) operators.

Structure preservation may reduce model biases and more accurately represent physical processes, but it does not guarantee numerical stability for compressible flows with active tracers, such as the dry and moist compressible Euler equations and the thermal shallow water equations. We hypothesize that this limitation arises because energy is not a mathematical entropy for these equations. In this context, mathematical entropy refers to any conserved quantity that is a convex function of the prognostic variables; for instance, energy in the shallow water equations and physical entropy in the conservative dry Euler equations. Numerous studies \cite{waruszewski2022entropy, ranocha2020entropy, ducros2000high, morinishi2010skew, sjogreen2019entropy, gassner2016split, hennemann2021provably, pirozzoli2010generalized} have demonstrated that discrete mathematical entropy stability is both necessary and often sufficient for ensuring numerical stability. These methods often utilize the conservative form of the equations and aim to conserve the physical entropy by employing techniques such as splitting derivative operators, SBP, and entropy stable numerical fluxes. 

In \cite{ricardo2024dg, ricardo2024entropy}, the authors introduce numerical methods for the thermal shallow water
and dry compressible Euler equations that are both mimetic and mathematical entropy stable. These studies identify tracer
variance as a form of mathematical entropy and propose a split-form, skew-symmetric formulation of the equations. The resulting formulations are discretized using SBP operators, ensuring both entropy and energy stability.

In this work, we extend this structure-preserving and mathematical entropy-stable approach to the moist compressible Euler equations. The result is a thermodynamically consistent skew-symmetric formulation that, when discretized using a carefully designed SBP scheme, maintains both structure preservation and mathematical entropy stability. We then develop a novel energy and mathematical entropy stable discontinuous Galerkin spectral element method (DG-SEM) for these equations.

The rest of the paper is as follows. In section 2 we derive an operator-split skew-symmetric formulation of the moist compressible Euler equations. We demonstrate that energy and mathematical entropy conservation can be proved using only integration by parts, and therefore any SBP discretisation of these equations will inherit discrete analogues of these conservation properties. In section 3 we outline the thermodynamic approximation we use in our numerical method. The discrete conservation and stability results are independent of the particular thermodyanmic approximation used but we include this for completeness. In section 4 we introduce DG-SEM and detail our novel DG-SEM discretisation. Following this, in section 5 we prove that our discrete method: conserves mass, water, and entropy; and is semi-discretely energy and (mathematical) entropy stable. These theoretical results are then verified via numerical experiments in section 6. Finally, our conclusions from this study are presented in section 7.

\section{Continuous moist compressible Euler equations}

In this section we derive a split-form skew-symmetric formulation of the vector invariant moist compressible Euler equations and subsequently demonstrate that the moist compressible Euler equations conserve energy and tracer variance. This provides a set of equations that can be discretised using any SBP operators to yield a semi-discretely energy and tracer-variance conserving scheme. While much of this analysis exists in the literature \cite{bannon2003hamiltonian} we include it here to aid the exposition, and the analysis performed in this paper.

\subsection{Vector invariant pressure gradient formulation}
For simplicity we consider the 2D moist compressible Euler equations in thermodynamic equilibrium on domain $\Omega$ with periodic boundaries in the horizontal direction and wall/no-flux boundaries in the vertical. Our analysis also applies to the 3D moist compressible Euler equations and can be extended to include non-equilibrium thermodynamics \cite{bowen2022consistent2}, noting that this introduces source terms into the tracer variance evolution equation at the continuous level. The vector invariant moist compressible Euler equations in pressure gradient form are
\begin{equation}
	\frac{\partial \vb{v}}{\partial t} + \boldsymbol{\omega}\cross \vb{v} + \nabla \left(\tfrac{1}{2}|\vb{v}|^2 + \Phi\right) +\frac{1}{\rho}\nabla p = 0,
	\label{eq:vec-vel}
\end{equation}
\begin{equation}
	\frac{\partial \rho}{\partial t} + \nabla\cdot \rho\vb{v} = 0,
 \label{eq:vec-density}
\end{equation}
\begin{equation}
	\frac{\partial \rho\eta}{\partial t} + \nabla\cdot \eta\rho\vb{v} = 0,
 \label{eq:vec-entropy}
\end{equation}
\begin{equation}
	\frac{\partial \rho q_w}{\partial t} + \nabla\cdot q_w\rho\vb{v} = 0,
 \label{eq:vec-water}
\end{equation}
subject to the boundary condition
\begin{equation}
    \rho\vb{v}\left(\vb{x}, t\right)\cdot \vb{n} = 0, \vb{x}\in\partial \Omega,
 \label{eq:no-flux-boundary}
\end{equation}
and the initial condition
\begin{align}
    \vb{v} = \vb{v}_0(\mathbf{x}), \quad {\rho} = {\rho}_0(\mathbf{x}), \quad {\eta} = {\eta}_0(\mathbf{x}), \quad {q_w} = {q_w}_0(\mathbf{x}), \quad \mathbf{x} \in \Omega.
\end{align}
Here our prognostic variables are: flow velocity $\vb{v}$, density of the water-air mixture $\rho$, entropy $\rho \eta$, and total water $\rho q_w$. Additionally, $\eta$ is the specific entropy, $q_w$ is the total water mass fraction, $\boldsymbol{\omega}=\nabla \cross \vb{v} + f$ is the absolute vorticity, $f$ is the Coriolis parameter, $\Phi=gz$ is the gravitational potential, $\vb{n}$ is the unit normal of the boundary, and $p(\rho, \eta, q_w)$ is the pressure. These equations are closed with an approximate expression for the pressure, which if not chosen in a thermodynamically consistent manner can violate energy conservation.

\subsection{Thermodynamics and energy conservation}

In this section we follow a similar analysis to \cite{bannon2003hamiltonian} to derive a skew symmetric form of the moist compressible Euler equations and subsequently show that energy is conserved. The energy density $e$ of the moist compressible Euler equations is defined as the sum of kinetic, potential, and internal energy densities
\begin{equation}
    e = \tfrac{1}{2}\rho\norm{\vb{v}}^2 + \rho \Phi + \rho u,
\end{equation}
where $u$ is the specific internal energy. Energy conservation follows from the fundamental thermodynamic identities 
\begin{equation}
	\left(\frac{\delta  u}{\delta  \rho}\right)_{\eta, q_w}=\frac{p}{\rho^2},\; \left(\frac{\delta  u}{\delta  \eta}\right)_{\rho, q_w}=T, \left(\frac{\delta u}{\delta  q_w}\right)_{\rho, \eta}=\mu_w,
	\label{eq:thermo-derivs}
\end{equation}
where $T$ is the temperature and $\mu_w$ is the chemical potential of water. This enables us to calculate the functional derivatives of energy w.r.t. the prognostic variables
\begin{equation}
	\frac{\delta e}{\delta \vb{v}} = \rho \vb{v} := \vb{F}, \; \frac{\delta e}{\delta \rho\eta} = T,\; \frac{\delta e}{\delta \rho q_w} = \mu_{w} \;,
 \label{eq:func-derivs-1}
\end{equation}
\begin{equation}
	\frac{\delta e}{\delta \rho} = \tfrac{1}{2}|\vb{v}|^2 + \Phi + h - T\eta - \mu_w q_w := G,
 \label{eq:func-derivs-2}
\end{equation}
where $h=u + \frac{p}{\rho}$ is the enthalpy. Using the identity $\nabla u = \frac{\delta u}{\delta\rho}\nabla\rho + \frac{\delta u}{\delta\eta}\nabla\eta + \frac{\delta u}{\delta q}\nabla q$ we express the pressure gradient term as
\begin{equation}
\begin{split}
    \frac{1}{\rho}\nabla p &= \nabla h - \nabla u + \frac{p}{\rho^2}\nabla \rho = \nabla h - T\nabla \eta - \mu_w\nabla q_w \\
    &= \nabla \left(h - T\eta - q_w\mu_w\right) + \eta \nabla T + q_w \nabla \mu_w.
    \end{split}\label{eq:pgrad}
\end{equation}
Substituting \eqref{eq:pgrad} into the velocity equation \eqref{eq:vec-vel} yields a skew-symmetric formulation of the moist compressible Euler equations
\begin{equation}
	\frac{\partial \vb{v}}{\partial t} + \boldsymbol{\omega}\cross \vb{v} + \nabla G + \eta \nabla T + q_w \nabla \mu_w = 0,
	\label{eq:skew-vel1}
\end{equation}
\begin{equation}
	\frac{\partial \rho}{\partial t} + \nabla\cdot \rho\vb{v} = 0,
 \label{eq:skew-density}
\end{equation}
\begin{equation}
	\frac{\partial \rho\eta}{\partial t} + \nabla\cdot \eta\rho\vb{v} = 0,
 \label{eq:skew-entropy}
\end{equation}
\begin{equation}
	\frac{\partial \rho q_w}{\partial t} + \nabla\cdot q_w\rho\vb{v} = 0.
 \label{eq:skew-water}
\end{equation}
We now prove that the moist compressible Euler equations conserve energy by starting with the skew-symmetric form and using integration by parts.

\begin{theorem}\label{theo:conservation_of_energy}
Consider the the moist compressible Euler equations \eqref{eq:skew-vel1}-\eqref{eq:skew-water} on the smooth spatial domain $\Omega \subset \mathbb{R}^2$ with periodic and wall \eqref{eq:no-flux-boundary} boundary conditions. For continuous solutions $(\vb{v}, \rho, \eta, q_w): \Omega\times\mathbb{R}^{+} \to \mathbb{R}^{5}$, at time $t \geq 0$ let the total energy be denoted by $\mathcal{E}(t) = \int_\Omega{ed\Omega}$. The total energy is conserved, that is we have
\begin{align}
    \frac{d \mathcal{E}}{dt} = 0, \quad \forall t\ge 0.
\end{align}
\end{theorem}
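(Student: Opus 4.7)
The plan is to differentiate $\mathcal{E}(t)=\int_\Omega e\, d\Omega$ in time, expand via the chain rule in the four prognostic variables $(\vb{v},\rho,\rho\eta,\rho q_w)$, substitute in the skew-symmetric evolution equations \eqref{eq:skew-vel1}--\eqref{eq:skew-water}, and then integrate by parts to cancel everything pairwise. Because the functional derivatives of $e$ are precisely $\rho\vb{v}$, $G$, $T$, and $\mu_w$ (see \eqref{eq:func-derivs-1}--\eqref{eq:func-derivs-2}), the time derivative of $\mathcal{E}$ becomes
\begin{equation*}
\frac{d\mathcal{E}}{dt}=\int_\Omega\!\Bigl(\rho\vb{v}\cdot\partial_t\vb{v}+G\,\partial_t\rho+T\,\partial_t(\rho\eta)+\mu_w\,\partial_t(\rho q_w)\Bigr)d\Omega,
\end{equation*}
and substituting the evolution equations reduces the question to showing that the resulting spatial integral vanishes.

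First I would handle the rotational term: writing $\vb{F}=\rho\vb{v}$, the contribution $\vb{F}\cdot(\boldsymbol{\omega}\cross\vb{v})=\rho\,\vb{v}\cdot(\boldsymbol{\omega}\cross\vb{v})$ is zero by the scalar triple product identity, independently of boundary conditions. The remaining contributions split into three structurally identical pieces, one each for $G$, $T$, and $\mu_w$. For the $G$-piece I would combine the $-\vb{F}\cdot\nabla G$ term coming from the velocity equation with the $-G\,\nabla\cdot\vb{F}$ term coming from the density equation and recognise $-\nabla\cdot(G\vb{F})$; applying the divergence theorem turns it into a boundary integral of $G\vb{F}\cdot\vb{n}$ which vanishes by the no-flux condition $\rho\vb{v}\cdot\vb{n}=0$ and periodicity. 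The same maneuver applied to $(-\eta\vb{F}\cdot\nabla T)+(-T\,\nabla\cdot(\eta\vb{F}))=-\nabla\cdot(T\eta\vb{F})$ and to $(-q_w\vb{F}\cdot\nabla\mu_w)+(-\mu_w\,\nabla\cdot(q_w\vb{F}))=-\nabla\cdot(\mu_w q_w\vb{F})$ disposes of the $T$- and $\mu_w$-pieces for the same reason.

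The conceptual core of the argument is that the skew-symmetric rewrite \eqref{eq:pgrad} was engineered precisely so that the pressure-gradient work $\vb{F}\cdot\tfrac{1}{\rho}\nabla p$ appears split into three gradient terms, each of which pairs with the corresponding divergence-form tracer equation to produce a total divergence; energy conservation then follows from nothing more than the product rule and the divergence theorem, with no reliance on the particular thermodynamic closure $u(\rho,\eta,q_w)$ beyond the identities \eqref{eq:thermo-derivs} already used to derive \eqref{eq:pgrad}. I expect no genuine obstacle: the main thing to be careful about is bookkeeping, namely matching each advective divergence against its corresponding gradient under a common total divergence, and verifying that each boundary flux has the factor $\vb{F}\cdot\vb{n}$ so that \eqref{eq:no-flux-boundary} kills it. This structure is exactly what makes the argument transfer verbatim to a semi-discrete SBP discretisation, which is the point of doing it at the continuous level in this form.
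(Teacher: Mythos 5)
Your proposal is correct and matches the paper's own proof essentially step for step: the same chain-rule expansion using the functional derivatives \eqref{eq:func-derivs-1}--\eqref{eq:func-derivs-2}, the same triple-product cancellation of the rotational term, and the same pairing of each gradient term from the velocity equation with the corresponding divergence-form tracer equation before converting to the boundary flux $\left(G+\eta T+q_w\mu_w\right)\vb{F}\cdot\vb{n}$, which vanishes by \eqref{eq:no-flux-boundary} and periodicity. The only cosmetic difference is that you phrase the cancellation as recognising a total divergence via the product rule, whereas the paper phrases it as integration by parts applied to each pair; these are the same computation.
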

\begin{proof}
We consider the time derivative of the total energy and apply the chain rule in time, we have
\begin{align}
     \frac{d\mathcal{E}}{dt} = \int_{\Omega}{\frac{\partial e}{\partial t} d\Omega} &= \int_{\Omega}{\left(
            \frac{\delta e}{\delta \vb{v}}\cdot \frac{\partial \vb{v}}{\partial t}
            + \frac{\delta e}{\delta \rho} \frac{\partial \rho}{\partial t}
            + \frac{\delta e}{\delta (\rho\eta)} \frac{\partial \left(\rho\eta\right)}{\partial t}
            + \frac{\delta e}{\delta (\rho q_w)}\frac{\partial \left(\rho q_w\right)}{\partial t}
            \right)d\Omega}.
\end{align}
Using the functional derivatives in \eqref{eq:func-derivs-1} and \eqref{eq:func-derivs-2}
    \begin{equation}
        \begin{split}
            \frac{d\mathcal{E}}{dt}  &= \int_{\Omega}{\left(
            \vb{F}\cdot \frac{\partial \vb{v}}{\partial t}
            + G \frac{\partial \rho}{\partial t}
            + T \frac{\partial \left(\rho\eta\right)}{\partial t}
            + \mu_w\frac{\partial \left(\rho q_w\right)}{\partial t}
            \right)d\Omega} \\
        &= -\int_{\Omega}{\left(\vb{F}\cdot \left(
    \boldsymbol{\omega}\cross \vb{v}\right)+
            \vb{F}\cdot \nabla G + \eta \vb{F}\cdot \nabla T
            + q_w \vb{F}\cdot \nabla \mu_w\right)d\Omega} \\
            &- \int_{\Omega}{\left(G\nabla \cdot \vb{F} + T\nabla \cdot \left(\eta\vb{F}\right) + \mu_w\nabla \cdot \left(q_w\vb{F}\right)
            \right)d\Omega}.
        \end{split}
    \end{equation}
    Noting that $\vb{F}\cdot \left(
    \boldsymbol{\omega}\cross \vb{v}\right) = 0$ and collecting terms together we have
    \begin{equation}
        \begin{split}
            \frac{d\mathcal{E}}{dt}  
        = &-\int_{\Omega}{\left(
            \left(\vb{F}\cdot \nabla G + G\nabla \cdot \vb{F}\right) + \left(\eta \vb{F}\cdot \nabla T
            + T\nabla \cdot \left(\eta\vb{F}\right)\right)\right)d\Omega} \\
            &- \int_{\Omega}{\left( q_w \vb{F}\cdot \nabla \mu_w + \mu_w\nabla \cdot \left(q_w\vb{F}\right)
            \right)d\Omega} .
        \end{split}
    \end{equation}
    Applying integration by parts we have
    \begin{equation}
        \begin{split}
            \frac{d\mathcal{E}}{dt}  
        &= -\oint_{\partial \Omega}{\left(\left(G + \eta T + q_w \mu_w \right)\vb{F}\right)\cdot \vb{n}dl}.
        \end{split}
    \end{equation}
    The final step of the proof involves eliminating the boundary integral  by imposing wall boundary conditions $\vb{F}\cdot\vb{n}=0$ and cancellation of boundary terms across periodic boundaries, giving
    \begin{equation}
        \begin{split}
            \frac{d\mathcal{E}}{dt}  
        &= 0.
        \end{split}
    \end{equation}
    The proof is complete.
    
\end{proof}
The proof only relies upon integration by parts and $\vb{F}\cdot \left(
    \boldsymbol{\omega}\cross \vb{v}\right) = 0$, implying that a well-designed SBP discretisation of \eqref{eq:skew-vel1}-\eqref{eq:skew-water} on collocated grids would hopefully conserve energy at the discrete level. Note that non-collocated methods can conserve energy by discretising the rotational term in the form $\vb{q}\cross\vb{F}$, where $q = \frac{\vb{\omega}}{\rho}$ \cite{mcrae2014energy,lee2018discrete}.

\subsection{Mathematical entropy and tracer variance}

Entropy in the mathematical sense is any quantity that is both conserved and a convex function of the prognostic variables. Discrete entropy stability is closely tied to numerical stability, and many numerical methods achieve stability by ensuring discrete entropy stability \cite{waruszewski2022entropy, ranocha2020entropy, ducros2000high, morinishi2010skew, sjogreen2019entropy, gassner2016split, hennemann2021provably, pirozzoli2010generalized, chan2018discretely, ricardo2024dg, ricardo2024entropy}. For the equilibrium moist compressible Euler equations both tracer variances, $\tfrac{1}{2}\rho q_w^2$ and $\tfrac{1}{2}\rho\eta^2$, are mathematical entropies. We prove conservation of the tracer variances in the following theorem.
\begin{theorem}\label{theo:conservation_of_variance}
Consider the the moist compressible Euler equations \eqref{eq:skew-vel1}-\eqref{eq:skew-water} on the smooth spatial domain $\Omega \subset \mathbb{R}^2$ with periodic and wall \eqref{eq:no-flux-boundary} boundary conditions. For continuous solutions $(\vb{v}, \rho, \eta, q_w): \Omega\times\mathbb{R}^{+} \to \mathbb{R}^{5}$, at time $t\ge 0$ let the total entropy and water variances be denoted by $\mathcal{Y}(t) = \int_\Omega{\tfrac{1}{2}\rho \eta^2d\Omega}$ and $\mathcal{Z}(t) = \int_\Omega{\tfrac{1}{2}\rho q_w^2d\Omega}$. The total entropy and water variances are conserved, that is we have
\begin{align}
    \frac{d \mathcal{Y}}{dt} = 0, \quad \frac{d\mathcal{Z}}{dt} = 0, \quad \forall t\ge 0.
\end{align}
\end{theorem}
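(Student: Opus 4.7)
The plan is to mimic the energy-conservation proof of Theorem \ref{theo:conservation_of_energy}, but exploit only the two scalar transport equations \eqref{eq:skew-entropy}--\eqref{eq:skew-water} together with the continuity equation \eqref{eq:skew-density}; the momentum equation plays no role here. First, I would differentiate $\mathcal{Y}$ under the integral and use the product rule to write $\partial_t(\tfrac{1}{2}\rho\eta^2) = \tfrac{1}{2}\eta^2\,\partial_t\rho + \rho\eta\,\partial_t\eta$. Since $\eta$ itself is not a prognostic variable, I would first derive an advective evolution equation for it by subtracting $\eta$ times \eqref{eq:skew-density} from \eqref{eq:skew-entropy} and cancelling, yielding $\rho\,\partial_t\eta + \rho\vb{v}\cdot\nabla\eta = 0$.

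Substituting back, the time derivative of the variance density reduces to $-\tfrac{1}{2}\eta^2\nabla\cdot(\rho\vb{v}) - \rho\eta\,\vb{v}\cdot\nabla\eta$, and the key algebraic observation is that this is exactly $-\nabla\cdot\!\left(\tfrac{1}{2}\eta^2\rho\vb{v}\right)$ by the product rule. Integrating over $\Omega$ and invoking the divergence theorem converts the volume integral into a boundary integral of $\tfrac{1}{2}\eta^2\,\vb{F}\cdot\vb{n}$ over $\partial\Omega$; the wall condition \eqref{eq:no-flux-boundary} and cancellation across periodic boundaries then kill this term, giving $d\mathcal{Y}/dt = 0$.

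The argument for $\mathcal{Z}$ is structurally identical, with $\eta$ replaced by $q_w$ and \eqref{eq:skew-entropy} replaced by \eqref{eq:skew-water} throughout, so I would only write one of the two derivations out in full and remark that the other is verbatim. I do not anticipate any serious obstacle: everything hinges on the clean product-rule identity that turns $\tfrac{1}{2}\eta^2\nabla\cdot(\rho\vb{v}) + \rho\eta\,\vb{v}\cdot\nabla\eta$ into a perfect divergence, combined with integration by parts. This is deliberate, since the point of the theorem for the rest of the paper is that the derivation uses \emph{only} the SBP-compatible tools of the product rule and integration by parts, so a carefully designed SBP discretisation will inherit discrete variance conservation directly.
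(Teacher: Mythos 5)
Your proof is correct and follows essentially the same route as the paper's: both reduce $\tfrac{d}{dt}\int_\Omega \tfrac12\rho\gamma^2\,d\Omega$ (for $\gamma\in\{\eta,q_w\}$) to the boundary integral $-\tfrac12\oint_{\partial\Omega}\gamma^2\,\vb{F}\cdot\vb{n}\,dl$ using the product rule and the divergence theorem, and then kill it with the wall and periodic conditions; the only difference is bookkeeping (you pass through the advective form $\rho\,\partial_t\gamma+\rho\vb{v}\cdot\nabla\gamma=0$ and collapse to a perfect divergence, whereas the paper keeps the flux-form combination $\gamma\,\partial_t(\rho\gamma)-\tfrac12\gamma^2\partial_t\rho$ and applies the splitting \eqref{eq:chain-product-rule} before integrating by parts). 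One caution on your closing remark: the product rule is \emph{not} an SBP-compatible tool --- the paper stresses that SBP operators mimic integration by parts but generally fail to mimic the chain and product rules discretely, which is precisely why the split-form equations \eqref{eq:vel-skew-split}--\eqref{eq:wat-skew-split} are introduced before discretisation; so this continuous proof does not transfer to the discrete level "directly" in the way you suggest.
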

\begin{proof}

The proofs for entropy variance and water variance are identical so we consider $\tfrac{1}{2}\rho \gamma^2$ with $\gamma \in \{\eta, q_w\}$,  ${\Upsilon}(t) = \int_\Omega{\tfrac{1}{2}\rho \gamma^2d\Omega}$ and 
$$
\frac{\partial \left(\rho \gamma\right)}{\partial t} + \nabla\cdot \left(\gamma\rho\vb{v}\right) = 0.
$$
Consider the time derivative of ${\Upsilon}(t)$, we have
\begin{equation}
\begin{split}
    \frac{d \Upsilon}{dt} = \frac{\partial }{\partial t}\int_{\Omega}{\tfrac{1}{2}\rho\gamma^2d\Omega} &= \int_{\Omega}{\left(\gamma \frac{\partial \rho \gamma}{\partial t} - \tfrac{1}{2}\gamma^2 \frac{\partial \rho}{\partial t}\right) d\Omega} \\
    &= -\int_{\Omega}{\left(\gamma \nabla \cdot \left(\gamma \vb{F}\right) - \tfrac{1}{2}\gamma^2 \nabla \cdot \vb{F}\right) d\Omega}.
\end{split}
\end{equation}
We apply the chain rule and the product rule, that is
\begin{align}\label{eq:chain-product-rule}
    \nabla \cdot \left(\gamma \vb{F} \right)= \tfrac{1}{2} \nabla \cdot \left(\gamma \vb{F}\right) + \tfrac{1}{2} \vb{F}\cdot \nabla \gamma + \tfrac{1}{2}\gamma\nabla \cdot  \vb{F},
\end{align}
and we have
\begin{equation}
\begin{split}
    \frac{d \Upsilon}{dt} 
    &= -\int_{\Omega}{\left(\tfrac{1}{2}\gamma \nabla \cdot \left(\gamma \vb{F}\right) + \tfrac{1}{2}\gamma \vb{F}\cdot \nabla \gamma + \tfrac{1}{2}\gamma^2\nabla \cdot  \vb{F} - \tfrac{1}{2}\gamma^2 \nabla \cdot \vb{F}\right) d\Omega} \\
    &= -\frac{1}{2}\int_{\Omega}{\left(\gamma \nabla \cdot \left(\gamma \vb{F}\right) + (\gamma \vb{F})\cdot \nabla \gamma\right) d\Omega}.
\end{split}
\end{equation}
We use integration by parts (divergence theorem) to convert the volume integrals to contour integrals around the boundary,  and we have 
\begin{equation}
\begin{split}
    \frac{d \Upsilon}{dt} 
    &= -\frac{1}{2}\oint_{\partial \Omega}{(\gamma^2 \vb{F})\cdot \vb{n} d\Gamma}.
    \label{eq:cont-ent-cons}
\end{split}
\end{equation}
As before, the final step of the proof involves eliminating the contour integral on the boundary by imposing wall boundary conditions $\vb{F}\cdot\vb{n}=0$ and cancellation of boundary terms across periodic boundaries, giving
    \begin{equation}
        \begin{split}
            \frac{d \Upsilon}{dt}  
        &= 0.
        \end{split}
    \end{equation}
    The proof is complete.
    
    \end{proof}

The key ingredients that are necessary in proving the continuous results, Theorem 
\ref{theo:conservation_of_energy}, and Theorem \ref{theo:conservation_of_variance},
are integration by parts,  the chain rule and the product rule. However, while SBP numerical derivative operators  are  designed to mimic integration by parts property at the discrete level, it is challenging  for SBP operators to mimic the chain rule or  product rule at the discrete level.  To be able replicate the continuous analysis at the discrete level, the moist compressible Euler equations \eqref{eq:skew-vel1}-\eqref{eq:skew-water} at the continuous level must be rewritten into the so-called skew-symmetric split-form so that we can negate the use of the chain rule and product rule at the discrete level \cite{ricardo2024entropy,ricardo2024dg}.

\subsection{Skew-symmetric split-form and thermodynamic consistent formulation}
Discrete derivative operators, in the SBP form, cannot discretely mimic the product rule or the chain rule \cite{ricardo2024dg, ricardo2024entropy}. So, to discretely conserve the tracer variances we introduce the required splitting into the advection equations. Simultaneously conserving energy with SBP operators now requires adding corresponding splitting into the skew-symmetric velocity equation \eqref{eq:skew-vel1}. For a vector field $\vb{F}$ and scalar fields $\gamma$, $T$,
we consider the identities
\begin{align}\label{eq:product_rule_1}
     \nabla \cdot \left(\gamma \vb{F}\right) =  \vb{F}\cdot \nabla \gamma + \gamma\nabla \cdot  \vb{F},
\end{align}
\begin{align}\label{eq:product_rule_2}
    \gamma \nabla T = \nabla \left(\gamma T\right) - T \nabla \gamma.
\end{align}
Our skew-symmetric split-form and thermodynamically consistent formulation  of moist compressible Euler equations is obtained by replacing the corresponding terms in \eqref{eq:skew-vel1}-\eqref{eq:skew-water} with the averages of the left and right hand sides of \eqref{eq:product_rule_1}-\eqref{eq:product_rule_2}.  We have
\begin{equation}
\begin{split}
	\frac{\partial \vb{v}}{\partial t} &+ \boldsymbol{\omega}\cross \vb{v} + \nabla G + \tfrac{1}{2}\left(\eta \nabla T + \nabla \left(\eta T\right) - T \nabla \eta\right) \\ &+ \tfrac{1}{2}\left(q_w \nabla \mu_w +  \nabla \left(q_w\mu_w\right) - \mu_w\nabla q_w \right) = 0,
	\label{eq:vel-skew-split}
 \end{split}
\end{equation}
\begin{equation}
	\frac{\partial \rho}{\partial t} + \nabla\cdot \rho\vb{v} = 0,
 \label{eq:den-skew-split}
\end{equation}
\begin{equation}
	\frac{\partial \rho\eta}{\partial t} + \tfrac{1}{2}\left(\nabla\cdot \eta\vb{F} + \eta \nabla\cdot \vb{F} + \vb{F}\cdot \nabla \eta\right) = 0,
 \label{eq:ent-skew-split}
\end{equation}
\begin{equation}
	\frac{\partial \rho q_w}{\partial t} + \tfrac{1}{2}\left(\nabla\cdot q_w\vb{F} + q_w \nabla\cdot \vb{F} + \vb{F}\cdot \nabla q_w\right) = 0.
 \label{eq:wat-skew-split}
\end{equation}
It is significantly noteworthy that, using our skew-symmetric split-form and thermodynamic consistent formulation  of moist compressible Euler equations \eqref{eq:vel-skew-split}--\eqref{eq:wat-skew-split},  we can now prove the continuous results, Theorem 
\ref{theo:conservation_of_energy} and Theorem \ref{theo:conservation_of_variance}, by applying the energy method and
employing integration by parts only, without the chain rule and the product rule.
In section 4, we will  target the skew-symmetric split-form and thermodynamic consistent formulation  of moist compressible Euler equations \eqref{eq:vel-skew-split}--\eqref{eq:wat-skew-split} with our discontinuous Galerkin discretisation. We will prove numerical  stability by emulating the continuous results, Theorem 
\ref{theo:conservation_of_energy} and Theorem \ref{theo:conservation_of_variance}, at the discrete level.
    



\section{Thermodynamics}

This section presents the thermodynamic model we use to close the moist compressible Euler equations. We treat moist air as a mixture of dry air, water vapour, liquid water, and ice, and following \cite{eldred2022thermodynamically} assume:
\begin{itemize}
	\item gaseous components (dry air and water vapour) are ideal gases,
	\item condensates (liquid water and ice) are incompressible and have negligible volume,
	\item constant specific heats
       \item all components are in thermodynamic equilibrium.
\end{itemize}
These are used to derive the thermodynamic potentials, partition the water mass fraction $q_w$ into the separate components $q_v$, $q_l$, and $q_i$, and then calculate the combined water chemical potential $\mu_w$.

\subsection{Potentials}
Here $u^j$, $\eta^j$, $g^j$, $R^j$, $C_v^j$, and $C_p^j$  are the specific internal energy, specific entropy, Gibbs potential, ideal gas constant, and specific heats at constant volume and pressure of the components $j\in{d, v, l, i}$ corresponding to dry air, water vapour, liquid water, and ice. The values for these constants can be found in appendix B. We use internal energy as our thermodynamic potential as in \cite{bowen2022consistent1,bowen2022consistent2}, and with the above approximations these potentials are
\begin{equation}
	u^d = C^d_v \exp\left(\frac{\eta^d}{C^d_v}\right)\left(q^d\rho\right)^{\frac{R^d}{C_v^d}},
 \label{eq:dry-pot}
\end{equation}
\begin{equation}
	u^v = C^v_v T_0\exp\left(\frac{\eta^v}{C^v_v}- \frac{C^v_p}{C^v_v} - \frac{L^s_{00}}{C^v_vT_0}\right)\left(\frac{q^v\rho}{\rho^v_0}\right)^{\frac{R^v}{C_v^v}} + L^s_{00},
 \label{eq:vap-pot}
\end{equation}
\begin{equation}
	u^l = C^l T_0\exp\left(\frac{\eta^l}{C^l}- 1 - \frac{L^f_{00}}{C^lT_0}\right) + L^f_{00},
 \label{eq:liq-pot}
\end{equation}
\begin{equation}
	u^i = C^i T_0\exp\left(\frac{\eta^i}{C^i}- 1\right).
 \label{eq:ice-pot}
\end{equation}

\subsection{Thermodynamic equilibrium}

The thermodynamic potentials, and hence the pressure gradient calculations, require the water species mass fractions. These mass fraction can be inferred from the prognostic variables under the thermodynamic equilibrium assumption. For a given specific entropy $\eta$, density $\rho$, and water mass fraction $q_w$, the second law of thermodynamics implies that the equilibrium values of $q^j$ will minimise the specific internal energy. Mathematically the problem we solve is: given $\rho$, $q^w$, $q^d=1-q_w$, and $\eta = \sum_j{q^j\eta^j}$, minimize $u=\sum_j{q^j u^j}$ subject to $q^j \geq 0$ and $q^w = q^v + q^l + q^i$. Details can be found in appendix A.


%


\section{Spatial discretisation}

Here we present the details of the specific DG discretisation we use. Note that any SBP discretisation of equations \eqref{eq:vel-skew-split}-\eqref{eq:wat-skew-split} with suitable numerical fluxes would achieve the same conservation and stability results.

\subsection{Spaces}
We decompose the domain into quadrilateral elements $\Omega^m$. For each element we define an invertible map to the reference square $r(\vb{x};m): \Omega^m \rightarrow [-1, 1]^2$, and approximate solutions by polynomials of order $p$ in this reference square. We use a computationally efficient tensor product Lagrange polynomial basis with interpolation points collocated with Gauss-Lobatto-Legendre (GLL) quadrature points. Using this basis the polynomials of order $p$ on the reference square can be defined as
\begin{equation}
    P_p := \text{span}_{i, j=1}^{p+1} l_i(\xi) l_j(\eta)\text{,}
\end{equation}
where $(\xi, \eta) \in [-1, 1]^2$ and $l_i$ is the 1D Lagrange polynomial which interpolates the $i^{th}$ GLL node. We define a discontinuous scalar space $S$ as
\begin{equation}
    S = \{\phi \in L^2(\Omega) : \phi\left(r^{-1}\left(\xi, \eta; m\right)\right) \in P_p, \forall m \} \text{.}
\end{equation}
Note that within each element functions $\phi \in S$ can be expressed as
\begin{equation}
    \phi(\vb{x}) = \sum_{i,j=1}^{p+1}{\phi_{ijm}l_i(\xi)l_j(\eta)}, \forall \vb{x} \in \Omega^m\text{,}
\end{equation}
where $\xi, \eta = r(\vb{x}; m)$, $\phi_{ijm} = \phi(r^{-1}(\xi_i, \eta_j;m))$.

We then define a discontinuous vector-valued function space as a tensor product of $S$. Let $\vb{v}_1(x,y,z)$ and $\vb{v}_2(x,y,z)$ be any independent set of vectors which span the tangent space of the manifold at each point $(x,y,z)$, then the discontinuous vector space $V$ containing the velocity $\vb{v}$ can be defined as
\begin{equation}
    V = \{\vb{w} : \vb{w} \cdot \vb{v}_i \in S, i=1,2 \}\text{.}
\end{equation}
We note that $V$ is independent of the particular choice of $\vb{v}_i$ however two convenient choices are the contravariant and covariant basis vectors.

\subsection{Operators}

Before defining our discrete operators we first introduce the covariant vectors
\begin{equation}
    \vb{g}_1 = \dfrac{\partial \vb{x}}{\partial \xi}, \quad \vb{g}_2 = \dfrac{\partial \vb{x}}{\partial \eta}\text{,}
\end{equation}
and the contravariant vectors
\begin{equation}
    \vb{g}^1 = \nabla \xi, \quad \vb{g}_2 = \nabla \eta \text{.}
\end{equation}
With these, the determinant of the Jacobian of $r(x; m)$ is given by
\begin{equation}
    J = |\vb{g}_1 \times \vb{g}_2|\text{.}
\end{equation}

Following \cite{taylor2010compatible}, we use the covariant and contravariant vectors to discretise the divergence, gradient, and curl as
\begin{equation}
    \nabla_d \cdot \vb{w} = \dfrac{1}{J}\bigg(\dfrac{\partial J\vb{w}\cdot \vb{g}^1}{\partial \xi} + \dfrac{\partial J\vb{w}\cdot \vb{g}^2}{\partial \eta} \bigg)\text{,}
    \label{eq:div}
\end{equation}
\begin{equation}
    \nabla_d \phi = \dfrac{\partial \phi}{\partial \xi} \vb{g}^1 + \dfrac{\partial \phi}{\partial \eta} \vb{g}^2\text{,}
    \label{eq:grad}
\end{equation}
\begin{equation}
    \nabla_d \cross \vb{w} = \dfrac{1}{J}\bigg(\dfrac{\partial \vb{w}\cdot\vb{g}_2}{\partial \eta} - \dfrac{\partial \vb{w}\cdot\vb{g}_1}{\partial \xi} \bigg)\vb{k}\text{,}
    \label{eq:curl}.
\end{equation}

\subsection{Integrals}

We approximate integrals over the elements by using GLL \newline quadrature. This defines a discrete element inner product
\begin{equation}
    \left\langle f, g \right\rangle_{\Omega^m} := \sum_{i, j=1}^{p+1}{w_i w_j J_{ij} f_{ij}g_{ij}}\text{,}
\end{equation}
where $w_i$ are the 1D quadrature weights. This approximates the integral
\begin{equation}
    \int_{\Omega^m}{fg d\Omega^m} \approx \left\langle f, g \right\rangle_{\Omega^m}\text{.}
\end{equation}
The discrete global inner product is defined simply as $\left\langle f, g \right\rangle_{\Omega} = \sum_m{\left\langle f, g \right\rangle_{\Omega^m}}$. Similarly for vectors
\begin{equation}
    \left\langle \vb{f}, \vb{g} \right\rangle_{\Omega^m} := \sum_{i, j=1}^{p+1}{w_i w_j J_{ij} \vb{f}_{ij} \cdot \vb{g}_{ij}}\text{.}
\end{equation}
We also approximate element boundary integrals using GLL quadrature
\begin{equation}
    \int_{\partial \Omega^m}{\vb{f}\cdot\vb{n}dl} \approx \left\langle \vb{f}, \vb{n}\right\rangle_{\partial\Omega^m}\text{,}
\end{equation}
where
\begin{equation}
\begin{split}
    \left\langle \vb{f}, \vb{n}\right\rangle_{\partial \Omega^m} := \sum_{j}{w_j\big(|g_1|_{1j} \vb{f}_{1j}\cdot \vb{n}_{1j} + |g_1|_{(p+1)j} \vb{f}_{(p+1)j}\cdot \vb{n}_{(p+1)j}} \\ {+ |g_2|_{j1}\vb{f}_{j1}\cdot \vb{n}_{j1} + |g_2|_{j(p+1)}\vb{f}_{j(p+1)}\cdot \vb{n}_{j(p+1)}\big)}\text{,}
\end{split}
\end{equation}
and $\vb{n}$ is the unit normal of the element boundary $\partial \Omega^m$. Similarly, we use $\vb{t}$ to denote the tangent vectors to the element boundaries, and we also use the boundary mean and jump notation
\begin{equation}
    \average{a} = \tfrac{1}{2}(a^+ + a^-)\text{,}
\end{equation}
\begin{equation}
    \jump{a} = a^+ - a^-\text{,}
\end{equation}
where $+$ and $-$ superscripts are arbitrary and denote quantities on opposite sides of the element boundary.

With the discrete setup defined, we now present the following lemma which is crucial to our discrete stability and conservation proofs.
\begin{lemma}
The discrete spaces satisfy an element-wise SBP property.
\begin{equation}
    \left\langle \phi, \nabla_d\cdot \vb{w}\right\rangle_{\Omega^m} + \left\langle \nabla_d\phi, \vb{w}\right\rangle_{\Omega^m} = \left\langle \phi\vb{w}, \vb{n}\right\rangle_{\partial\Omega^m}\text{,}\; \forall \phi \in S, \vb{w} \in V\text{.}
    \label{eq:sbp}
\end{equation}
\end{lemma}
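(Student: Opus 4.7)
The plan is to reduce the claim to the one-dimensional SBP property of the GLL differentiation/quadrature pair, applied line-by-line along the two reference directions, and then to match the boundary contributions with the definition of $\left\langle \phi\vb{w}, \vb{n}\right\rangle_{\partial\Omega^m}$.

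First I would expand both inner products using the definitions of $\nabla_d\cdot$ in \eqref{eq:div} and $\nabla_d$ in \eqref{eq:grad}. The key structural observation is that the $J_{ij}$ in the element inner product cancels the $1/J$ in the divergence, and combines with $\vb{g}^k$ in the gradient term, so that if I introduce the contravariant flux components $u_{ij} := (J\vb{g}^1\cdot \vb{w})_{ij}$ and $v_{ij} := (J\vb{g}^2\cdot \vb{w})_{ij}$, the left-hand side of \eqref{eq:sbp} takes the clean form
\begin{equation*}
\sum_{i,j} w_i w_j \left( \phi_{ij}\, \partial_\xi u|_{ij} + u_{ij}\, \partial_\xi \phi|_{ij} \right) + \sum_{i,j} w_i w_j \left( \phi_{ij}\, \partial_\eta v|_{ij} + v_{ij}\, \partial_\eta \phi|_{ij} \right).
\end{equation*}

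Next, I would invoke the standard one-dimensional SBP identity for the GLL pair: if $D$ is the matrix of the interpolation derivative and $W = \mathrm{diag}(w_1,\dots,w_{p+1})$, then $WD + D^\top W = \mathrm{diag}(-1,0,\dots,0,1)$, which is equivalent to saying that for any two grid functions $a_i$, $b_i$ on GLL nodes,
\begin{equation*}
\sum_i w_i \left( a_i\, (Db)_i + b_i\, (Da)_i \right) = a_{p+1}b_{p+1} - a_1 b_1.
\end{equation*}
Applying this in $\xi$ (with $j$ fixed) to $(\phi,u)$ and in $\eta$ (with $i$ fixed) to $(\phi,v)$, and summing the remaining quadrature weight in the orthogonal direction, collapses each double sum to a one-dimensional boundary sum over the corresponding pair of faces.

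Finally, I would check that the assembled boundary terms coincide with $\left\langle \phi\vb{w}, \vb{n}\right\rangle_{\partial\Omega^m}$. On the face $\xi = \pm 1$, the surface line element is $|\vb{g}_2|\,d\eta$ and the outward unit normal is $\pm \vb{g}^1/|\vb{g}^1|$, so that the surface Jacobian $|J\vb{g}^1\cdot\vb{n}|$ produced by the cancellation equals exactly the face weight used in the definition of the boundary inner product; an analogous matching occurs on $\eta = \pm 1$ with $\vb{g}^2$. Combining the four face contributions recovers the right-hand side of \eqref{eq:sbp}.

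The main obstacle I expect is the last bookkeeping step: making sure that the factor $J \vb{g}^k$ emerging from the discrete divergence/gradient aligns with the face Jacobian $|\vb{g}_\ell|$ and the outward unit normal appearing in $\left\langle\cdot,\vb{n}\right\rangle_{\partial\Omega^m}$, and that the signs coming from the endpoints $\pm 1$ of the 1D SBP identity agree with the outward orientation of $\vb{n}$. The volume-to-boundary reduction itself is routine once the 1D SBP identity is in hand, because GLL quadrature is exact on the polynomial products that appear after the cancellation of $J$.
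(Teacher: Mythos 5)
Your proposal is correct, and it is essentially the standard tensor-product SBP argument; the paper itself offers no proof of this lemma, deferring entirely to \cite{taylor2010compatible}, and the argument you sketch is precisely the one carried out there. The key cancellation you identify --- that $J_{ij}$ in the volume quadrature kills the $1/J$ in $\nabla_d\cdot$ while $J\vb{g}^k\cdot\vb{w}$ reassembles in the gradient term, reducing everything to the 1D identity $WD + D^\top W = \mathrm{diag}(-1,0,\dots,0,1)$ applied line-by-line --- is the whole content of the result. Two small refinements are worth making explicit. First, $\partial_\xi u$ must be read as the GLL differentiation matrix acting on the nodal values of $u = J\vb{w}\cdot\vb{g}^1$ (i.e.\ the derivative of the degree-$p$ interpolant), since on curved elements the metric terms need not be polynomial; with that reading the reduction is an exact matrix identity, $a^\top(Q+Q^\top)b = a_{p+1}b_{p+1}-a_1b_1$ with $Q=WD$, so no appeal to quadrature exactness of the products is actually required --- your closing remark slightly overstates what needs to be checked there. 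Second, your final bookkeeping step is right that on the face $\xi=\pm1$ one has $J|\vb{g}^1| = |\vb{g}_2|$ and hence $\phi\,(J\vb{w}\cdot\vb{g}^1) = |\vb{g}_2|\,\phi\,\vb{w}\cdot\vb{n}$ up to orientation; note that this exposes an apparent index swap in the paper's stated definition of $\left\langle \vb{f},\vb{n}\right\rangle_{\partial\Omega^m}$, which attaches $|g_1|$ to the $\xi=\pm1$ faces and $|g_2|$ to the $\eta=\pm1$ faces --- your derivation produces the opposite (and geometrically correct) pairing, so the mismatch is a typo in the paper's formula rather than a flaw in your argument.
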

\begin{proof}
    See \cite{taylor2010compatible}.
\end{proof}

\subsection{DG semi-discrete formulation}

We use the strong form DG-SEM, which has shown to be equivalent to the weak form \cite{kopriva2010quadrature}. Our discrete method is
\begin{equation}
\begin{split}
\left\langle \vb{w}, \vb{v}_t\right\rangle_{\Omega^m} &+ \left\langle \vb{w}, \omega \vb{k}\cross \vb{v} + \nabla_d G\right\rangle_{\Omega^m} + \left\langle \vb{w} \cdot \vb{n},\hat{G}-G\right\rangle_{\partial\Omega^m} \\
&+ \left\langle \vb{w}, \tfrac{1}{2}(q_w\nabla_d \mu_w + \nabla_d \left(\mu_w q_w\right) - \mu_w\nabla_d q_w)\right\rangle_{\Omega^m} + \left\langle \vb{w} \cdot \vb{n},\hat{q_w}\left(\hat{\mu}_w -\mu_w\right)\right\rangle_{\partial\Omega^m} \\
&+\left\langle \vb{w}, \tfrac{1}{2}(\eta\nabla_d T + \nabla_d \left(T \eta\right) - T\nabla_d \eta)\right\rangle_{\Omega^m} + \left\langle \vb{w} \cdot \vb{n},\hat{\eta}\left(\hat{T}-T\right)\right\rangle_{\partial\Omega^m} = 0\text{,}\;\forall \vb{w}\in V\text{,}
\end{split}
\label{eq:dg-velocity}
\end{equation}
\begin{equation}
	\left\langle \phi, \rho_t\right\rangle_{\Omega^m} + \left\langle \phi, \nabla_d \cdot \vb{F}\right\rangle_{\Omega^m} + \left\langle \phi, \big(\hat{\vb{F}}-\vb{F}\big)\cdot \vb{n}\right\rangle_{\partial\Omega^m} = 0\text{,}\;\forall \phi\in S\text{,}
 \label{eq:dg-depth}
\end{equation}
\begin{equation}
\begin{split}
	\left\langle \phi, (\rho\eta)_t\right\rangle_{\Omega^m} &+ \left\langle \phi, \tfrac{1}{2}(\eta\nabla_d \cdot \vb{F} + \vb{F}\cdot \nabla_d \eta + \nabla_d \cdot \left(\eta\vb{F}\right))\right\rangle_{\Omega^m} \\
 & + \left\langle \phi, \big(\hat{\eta}\hat{\vb{F}}-\eta\vb{F}\big)\cdot \vb{n}\right\rangle_{\partial\Omega^m} = 0\text{,}\;\forall \phi\in S\text{,}
 \end{split}
 \label{eq:dg-ent}
\end{equation}
\begin{equation}
\begin{split}
	\left\langle \phi, (\rho q_w)_t\right\rangle_{\Omega^m} &+ \left\langle \phi, \tfrac{1}{2}(q_w\nabla_d \cdot \vb{F} + \vb{F}\cdot \nabla_d q_w + \nabla_d \cdot \left(q_w\vb{F}\right))\right\rangle_{\Omega^m} \\
 & + \left\langle \phi, \big(\hat{q}_w\hat{\vb{F}}-q_w\vb{F}\big)\cdot \vb{n}\right\rangle_{\partial\Omega^m} = 0\text{,}\;\forall \phi\in S\text{,}
 \end{split}
 \label{eq:dg-wat}
\end{equation}
where the discrete vorticity is
\begin{equation}
    \left\langle \phi, \omega\right\rangle_{\Omega^m} = \left\langle  \phi\vb{k}, \nabla_d \cross\vb{v}\right\rangle_{\Omega^m} + \left\langle \phi, f\right\rangle_{\Omega^m} + \left\langle \phi, \left(\hat{\vb{v}} - \vb{v}\right) \cdot \vb{t}\right\rangle_{\partial\Omega^m}\;\forall \phi\in S\text{,}
    \label{eq:dg-vort}
\end{equation}
and the $\hat{\;}$ terms denote numerical fluxes. The numerical fluxes extend those in \cite{ricardo2024dg,ricardo2024conservation}, and are defined for the scalars $\alpha, \beta$ as
\begin{equation}
    \hat{G} = \average{ G } + \alpha\jump{\vb{F}}\cdot{\vb{n}^{+}}\text{.}
\end{equation}
\begin{equation}
    \hat{\vb{F}} = \average{ \vb{F} }\text{,}
\end{equation}
\begin{equation}
    \hat{\vb{v}} = \average{ \vb{v} }\text{,}
\end{equation}
\begin{equation}
    \hat{T} = \average{ T }\text{,}
\end{equation}
\begin{equation}
    \hat{\mu}_w = \average{ \mu_w }\text{,}
\end{equation}
\begin{equation}
    \hat{q}_w = \average{q_w} + \beta \jump{q_w}\text{sign}{\left(\hat{\vb{F}}\cdot\vb{n}^+\right)}\text{,}
\end{equation}
\begin{equation}
    \hat{\eta} = \average{\eta} + \beta \jump{\eta}\text{sign}{\left(\hat{\vb{F}}\cdot\vb{n}^+\right)}\text{.}
    \label{eq:ent-flux}
\end{equation}
We impose no-flux/wall boundary conditions $\vb{F}=0$ at the top and bottom boundaries by setting the numerical fluxes: $\hat{\vb{F}} = 0$, $\hat{G} = G + \gamma \vb{F}\cdot\vb{n}^+$, and $\hat{\vb{v}} = \vb{v}$, $\hat{T} = T$, $\hat{\mu}_w=\mu_w$.

As detailed in the subsequent section, our method is energy conserving for $\alpha = \gamma = 0$ and energy stable for $\alpha \geq 0$, $\gamma \geq 0$. Similarly, it is tracer variance conserving for $\beta = 0$ and tracer variance stable for $\beta \geq 0$. We note that a choice of $\beta=\tfrac{1}{2}$ corresponds to choosing upwind values of $\eta$ and $q_w$.

\section{Semi-discrete conservation and numerical stability}

In this section, we present and prove the conservation and stability properties of our spatial discretization \eqref{eq:dg-velocity}-\eqref{eq:ent-flux}. The main results of this paper are presented in this section, including energy and tracer variance conservation/stability, and conservation of mass, entropy, and water.

The semi-discrete analysis presented here closely mirrors the continuous analysis in Section 2. In the following proofs, we first employ SBP to demonstrate conservation within a single element, following the structure of the continuous proofs. Next, we utilize the numerical fluxes to establish conservation/stability over the whole domain, taking into account inter-element coupling and boundary conditions.

\subsection{Conservation of mass, entropy, and water}

Let $\mathcal{M}^m$, $\mathcal{S}^m$, and $\mathcal{W}^m$ be the total mass, entropy, and water mass in cell $m$, and $\mathcal{M}=\sum_m{\mathcal{M}^m}$, $\mathcal{S}=\sum_m{\mathcal{S}^m}$, and $\mathcal{W}=\sum_m{\mathcal{W}^m}$ be the total mass, entropy, and water mass over the entire domain.

\begin{theorem}
    The semi-discrete method \eqref{eq:dg-velocity}-\eqref{eq:ent-flux} conserves mass, entropy and water, that is $\mathcal{M}_t=\mathcal{S}_t=\mathcal{W}_t = 0$.
\end{theorem}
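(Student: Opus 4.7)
The plan is to test each of the three scalar balance laws against the constant $\phi = 1$, use SBP to collapse the volume terms into surface terms that are then cancelled by the $-\vb{F}$, $-\eta\vb{F}$, $-q_w\vb{F}$ pieces of the flux penalties, and finally sum over elements so that the remaining numerical-flux boundary contributions annihilate pairwise on interior edges and vanish at walls. Starting with the mass equation \eqref{eq:dg-depth}, choosing $\phi = 1$ makes $\nabla_d\phi$ vanish, so the SBP identity \eqref{eq:sbp} gives $\left\langle 1, \nabla_d\cdot \vb{F}\right\rangle_{\Omega^m} = \left\langle \vb{F}, \vb{n}\right\rangle_{\partial\Omega^m}$; combining with the flux penalty leaves $\mathcal{M}^m_t + \left\langle \hat{\vb{F}}, \vb{n}\right\rangle_{\partial\Omega^m} = 0$.

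For the entropy equation \eqref{eq:dg-ent} with $\phi = 1$, I would invoke SBP twice. One application with $(\phi,\vb{w}) = (\eta, \vb{F})$ collapses $\left\langle 1, \eta\nabla_d\cdot\vb{F}\right\rangle_{\Omega^m} + \left\langle 1, \vb{F}\cdot\nabla_d\eta\right\rangle_{\Omega^m}$ into $\left\langle \eta\vb{F}, \vb{n}\right\rangle_{\partial\Omega^m}$, and a second application with $\phi = 1$ applied to the conservative piece $\left\langle 1, \nabla_d\cdot(\eta\vb{F})\right\rangle_{\Omega^m}$ yields another $\left\langle \eta\vb{F}, \vb{n}\right\rangle_{\partial\Omega^m}$. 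After the $\tfrac{1}{2}$ split and cancellation against the $-\eta\vb{F}$ part of the numerical flux I obtain $\mathcal{S}^m_t + \left\langle \hat{\eta}\hat{\vb{F}}, \vb{n}\right\rangle_{\partial\Omega^m} = 0$. The water equation \eqref{eq:dg-wat} is handled identically, giving $\mathcal{W}^m_t + \left\langle \hat{q}_w\hat{\vb{F}}, \vb{n}\right\rangle_{\partial\Omega^m} = 0$.

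The final step is to sum over elements. I would first verify that $\hat{\vb{F}} = \average{\vb{F}}$, $\hat{\eta}$, and $\hat{q}_w$ are each single-valued on an interior edge, independent of the arbitrary $+/-$ labelling: for $\hat{\eta}$ and $\hat{q}_w$ the upwinding contribution $\beta\jump{\cdot}\,\text{sign}(\hat{\vb{F}}\cdot\vb{n}^+)$ is invariant since swapping $+$ and $-$ negates both factors. With single-valued fluxes paired against opposite outward normals on adjacent elements, every interior contribution cancels, and the imposed $\hat{\vb{F}} = 0$ removes the wall contributions at the top and bottom, yielding $\mathcal{M}_t = \mathcal{S}_t = \mathcal{W}_t = 0$. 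The main subtlety I anticipate is the SBP step applied to $\nabla_d\cdot(\eta\vb{F})$, since the pointwise product $\eta\vb{F}$ generally has higher polynomial degree than members of $V$; however, for $\phi = 1$ the SBP statement degenerates to the discrete divergence theorem $\left\langle 1, \nabla_d\cdot\vb{w}\right\rangle_{\Omega^m} = \left\langle \vb{w}, \vb{n}\right\rangle_{\partial\Omega^m}$, which follows directly from GLL quadrature applied to \eqref{eq:div} and is valid for any discrete vector field. Apart from this bookkeeping, the argument is a direct discrete mirror of the continuous identities $\int_\Omega \partial_t\rho\,d\Omega = 0$, etc., and does not need the more delicate manipulations reserved for the forthcoming energy and tracer-variance results.
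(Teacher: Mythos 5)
Your argument is correct and follows the paper's proof essentially step for step: test with $\phi=1$, apply SBP (twice for the tracer equations) to reduce each element balance to $-\langle 1,\hat{\cdot}\,\hat{\vb{F}}\cdot\vb{n}\rangle_{\partial\Omega^m}$, then cancel interior-interface contributions by single-valuedness of the numerical fluxes and wall contributions by $\hat{\vb{F}}=0$. Your explicit check that $\hat{\eta}$ and $\hat{q}_w$ are invariant under swapping the $+/-$ labels (both $\jump{\cdot}$ and $\text{sign}(\hat{\vb{F}}\cdot\vb{n}^+)$ negate) is a detail the paper leaves implicit, and is a worthwhile addition.
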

\begin{proof}
    Substituting $\phi=1$ into \eqref{eq:dg-depth} and using SBP as in \eqref{eq:sbp} yields
    \begin{equation}
    \begin{split}
        \mathcal{M}^m_t &= -\left\langle 1, \nabla_d \cdot \vb{F}\right\rangle_{ \Omega^m} -\left\langle 1, \left(\hat{\vb{F}} - \vb{F}\right)\cdot\vb{n}\right\rangle_{\partial \Omega^m} \\
        &= \left\langle \nabla_d 1,  \vb{F}\right\rangle_{ \Omega^m} -\left\langle 1, \hat{\vb{F}}\cdot\vb{n}\right\rangle_{\partial \Omega^m} \\
        &= -\left\langle 1, \hat{\vb{F}} \cdot\vb{n}\right\rangle_{\partial \Omega^m}.
        \end{split}
    \end{equation}
At the top and bottom boundaries $\hat{\vb{F}} = 0$, and at all other element boundaries contributions from neighbouring elements cancel by the continuity of $\hat{\vb{F}}$ giving
\begin{equation}
    \mathcal{M}_t = -\sum_m{\left\langle 1, \hat{\vb{F}} \cdot\vb{n}\right\rangle_{\partial \Omega^m}} = 0.
\end{equation}

Similarly, substituting $\phi=1$ in \eqref{eq:dg-ent} and applying SBP twice gives
\begin{equation}
\begin{split}
	\mathcal{S}_t &= -\left\langle 1, \tfrac{1}{2}(\eta\nabla_d \cdot \vb{F} + \vb{F}\cdot \nabla_d \eta + \nabla_d \cdot \left(\eta\vb{F}\right))\right\rangle_{\Omega^m} - \left\langle 1, \big(\hat{\eta}\hat{\vb{F}}-\eta\vb{F}\big)\cdot \vb{n}\right\rangle_{\partial\Omega^m} \\
 &= -\tfrac{1}{2}\left\langle \eta, \nabla_d \cdot \vb{F} \right\rangle_{\Omega^m}
 -\tfrac{1}{2}\left\langle \vb{F},  \nabla_d \eta\right\rangle_{\Omega^m}- \left\langle 1, \big(\hat{\eta}\hat{\vb{F}}-\tfrac{1}{2}\eta\vb{F}\big)\cdot \vb{n}\right\rangle_{\partial\Omega^m} \\
  &= - \left\langle 1, \hat{\eta}\hat{\vb{F}}\cdot \vb{n}\right\rangle_{\partial\Omega^m}.
 \end{split}
 \end{equation}
 As before
 \begin{equation}
    \mathcal{S}_t = -\sum_m{\left\langle 1, \hat{\eta}\hat{\vb{F}} \cdot\vb{n}\right\rangle_{\partial \Omega^m}} = 0.
\end{equation}
The same reasoning also shows $\mathcal{W}_t=0$.
\end{proof}

\subsection{Energy stability}

Let $\mathcal{E}^m=\left\langle\tfrac{1}{2}\rho \norm{\vb{v}}^2 + \rho u + \rho \Phi\right\rangle_{\Omega^m}$ be the discrete energy contained in element $m$, and $\mathcal{E}=\sum_m{\mathcal{E}^m}$ be the total discrete energy over the domain $\Omega$. Here we show that our semi-discrete method is energy stable. The result is contained in the following proof.

\begin{theorem}
    The semi-discrete method \eqref{eq:dg-velocity}-\eqref{eq:ent-flux} is energy stable $\mathcal{E}_t \leq 0$ for $\alpha,\gamma \geq 0$, and energy conserving $\mathcal{E}_t = 0$ for $\alpha=\gamma=0$.
\end{theorem}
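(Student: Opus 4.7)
The plan is to mirror the continuous energy proof in Theorem \ref{theo:conservation_of_energy} at the semi-discrete level, replacing integration by parts with the element-wise SBP lemma \eqref{eq:sbp} and using the numerical fluxes to close the inter-element and wall contributions. The crucial enabling fact is that the thermodynamic identities $\delta e/\delta \vb{v}=\vb{F}$, $\delta e/\delta\rho=G$, $\delta e/\delta(\rho\eta)=T$, $\delta e/\delta(\rho q_w)=\mu_w$ hold pointwise, so the time chain rule carries through exactly under GLL quadrature:
\begin{equation*}
\mathcal{E}^m_t = \left\langle \vb{F}, \vb{v}_t\right\rangle_{\Omega^m} + \left\langle G, \rho_t\right\rangle_{\Omega^m} + \left\langle T, (\rho\eta)_t\right\rangle_{\Omega^m} + \left\langle \mu_w, (\rho q_w)_t\right\rangle_{\Omega^m}.
\end{equation*}

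First, I substitute the right-hand sides of \eqref{eq:dg-velocity}--\eqref{eq:dg-wat} using the test choices $\vb{w}=\vb{F}\in V$ and $\phi\in\{G,T,\mu_w\}\subset S$. The rotation term drops out pointwise because $\vb{F}\cdot(\omega\vb{k}\cross\vb{v})=\rho\omega\,(\vb{v}\cdot(\vb{k}\cross\vb{v}))=0$. Next, the volume terms are paired via SBP. The $G$--$\vb{F}$ pair collapses to $-\left\langle G\vb{F},\vb{n}\right\rangle_{\partial\Omega^m}$. For the split $T$--$\eta$ block, two of the three pairings reduce by SBP to $-\left\langle T\eta\vb{F},\vb{n}\right\rangle_{\partial\Omega^m}$, while the third pairing---$\tfrac{1}{2}\left\langle T\vb{F},\nabla_d\eta\right\rangle_{\Omega^m}$ from the velocity equation against $-\tfrac{1}{2}\left\langle T\vb{F},\nabla_d\eta\right\rangle_{\Omega^m}$ from the entropy equation---cancels pointwise without invoking a product rule, which is precisely the reason the split form was introduced. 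The $\mu_w$--$q_w$ block collapses analogously.

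I then sum over elements. On each interior face the residual volume-boundary term $-\left\langle(G+T\eta+\mu_w q_w)\vb{F},\vb{n}\right\rangle_{\partial\Omega^m}$ combines with the explicit DG flux contributions from all four equations. Using $\vb{n}^-=-\vb{n}^+$ and the identity $\jump{ab}=\average{a}\jump{b}+\jump{a}\average{b}$, the $G$--$\vb{F}$ coupling reduces to $-\alpha(\jump{\vb{F}}\cdot\vb{n}^+)^2\leq 0$, while the $T$--$\eta$--$\vb{F}$ and $\mu_w$--$q_w$--$\vb{F}$ couplings cancel identically because $\hat{T}=\average{T}$, $\hat{\mu}_w=\average{\mu_w}$, and $\hat{\vb{F}}=\average{\vb{F}}$ telescope the three-quantity products regardless of the value of $\beta$ in $\hat\eta$ and $\hat{q}_w$. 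At wall faces, $\hat{\vb{F}}=0$, $\hat{T}=T$, $\hat{\mu}_w=\mu_w$ eliminate the tracer contributions, and $\hat{G}=G+\gamma\vb{F}\cdot\vb{n}^+$ leaves $-\gamma(\vb{F}\cdot\vb{n})^2\leq 0$. Summing yields $\mathcal{E}_t\leq 0$ for $\alpha,\gamma\geq 0$, with equality when $\alpha=\gamma=0$.

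The main obstacle is the three-quantity interior-face bookkeeping for the $T$--$\eta$--$\vb{F}$ and $\mu_w$--$q_w$--$\vb{F}$ couplings: the dissipative $\beta$-terms in $\hat\eta$ and $\hat{q}_w$ must not leak into the energy budget. Verifying through the jump/average expansion that the $\beta$-contributions cancel exactly---so that they remain confined to the tracer-variance budget of Theorem \ref{theo:conservation_of_variance}---is the most delicate step, and it is what justifies the specific averaging choices for $\hat{T}$, $\hat{\mu}_w$, and $\hat{\vb{F}}$.
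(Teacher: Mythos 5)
Your proposal follows essentially the same route as the paper's proof: the discrete chain rule for $\mathcal{E}^m_t$, testing with $\vb{F}$, $G$, $T$, $\mu_w$, pointwise cancellation of the rotational term, SBP pairing of the split volume terms, and the interface analysis via $\jump{ab}=\average{a}\jump{b}+\jump{a}\average{b}$ showing the $\beta$-terms factor out of a vanishing jump while the $\alpha$- and $\gamma$-terms yield negative semi-definite contributions. The argument is correct and no further comparison is needed.
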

\begin{proof}
    Following the continuous analysis
    \begin{equation}
        \mathcal{E}^m_t = \left\langle \vb{F}, \vb{v}_t\right\rangle_{\Omega^m}+\left\langle G, \rho_t\right\rangle_{\Omega^m}+\left\langle T, \left(\rho\eta\right)_t\right\rangle_{\Omega^m}+\left\langle \mu_w, \left(\rho q_w\right)_t\right\rangle_{\Omega^m},
    \end{equation}
 where $\boldsymbol{F}$, $G$, $T$, $\mu_w$ are the variational derivatives
        of the energy with respect to $\vb{v}$, $\rho$, $\rho\eta$, $\rho q_w$ as given in \eqref{eq:func-derivs-1}-\eqref{eq:func-derivs-2}.

Substituting $G$, $T$, and $\mu_w$ as test functions into equations \eqref{eq:dg-depth}-\eqref{eq:dg-wat} and applying SBP as in \eqref{eq:sbp}
\begin{equation}
    \left\langle G, \rho_t \right\rangle_{\Omega^m} = \left\langle \nabla_d G, \vb{F} \right\rangle _{\Omega^m}-\left\langle G, \hat{\vb{F}}\cdot \vb{n} \right\rangle _{\partial \Omega^m},
\end{equation}
\begin{equation}
    \left\langle T, \left(\rho\eta\right)_t \right\rangle_{\Omega^m} = \left\langle \vb{F}, \tfrac{1}{2}\left(\eta \nabla_d T + \nabla_d \left(\eta T\right) -T\nabla_d \eta \right) \right\rangle _{\Omega^m}-\left\langle T, \hat{\eta}\hat{\vb{F}}\cdot \vb{n} \right\rangle _{\partial \Omega^m},
\end{equation}
\begin{equation}
\begin{split}
    \left\langle \mu_w, \left(q_w\eta\right)_t \right\rangle_{\Omega^m} &= \left\langle \vb{F}, \tfrac{1}{2}\left(q_w \nabla_d \mu_w + \nabla_d \left(q_w \mu_w\right) -\mu_w\nabla_d q_w \right) \right\rangle _{\Omega^m}\\ &-\left\langle \mu_w, \hat{q}_w\hat{\vb{F}}\cdot \vb{n} \right\rangle _{\partial \Omega^m}.
\end{split}
\end{equation}
Next we substitute $\vb{F}$ as test function in \eqref{eq:dg-velocity}
\begin{equation}
\begin{split}
\left\langle \vb{F}, \vb{v}_t\right\rangle_{\Omega^m} = &-\left\langle \vb{F},  \nabla_d G\right\rangle_{\Omega^m} \\
&-\left\langle \vb{F}, \tfrac{1}{2}(\eta\nabla_d T + \nabla_d \left(T \eta\right) - T\nabla_d \eta)\right\rangle_{\Omega^m}  \\
&- \left\langle \vb{F}, \tfrac{1}{2}(q_w\nabla_d \mu_w + \nabla_d \left(\mu_w q_w\right) - \mu_w\nabla_d q_w)\right\rangle_{\Omega^m}  \\
& - \left\langle \vb{F} \cdot \vb{n},\left(\hat{G}-G\right)+\hat{\eta}\left(\hat{T}-T\right)+\hat{q_w}\left(\hat{\mu}_w -\mu_w\right)\right\rangle_{\partial\Omega^m},
\end{split}
\end{equation}
where we have used $\vb{F}\cdot \left(\omega \vb{k} \cross \vb{v}\right) = 0$ which holds point-wise for collocated methods. Therefore the volume terms cancel and the element energy evolution equation can be reduced to a numerical energy flux through the element boundary
\begin{equation}
\begin{split}
        \mathcal{E}^m_t =& \left\langle \vb{F}, \vb{v}_t\right\rangle_{\Omega^m}+\left\langle G, \rho_t\right\rangle_{\Omega^m}+\left\langle T, \left(\rho\eta\right)_t\right\rangle_{\Omega^m}+\left\langle \mu_w, \left(\rho q_w\right)_t\right\rangle_{\Omega^m}  \\
        =&- \left\langle \left(
        G \hat{\vb{F}}
        + \left(\hat{G} - G\right)\vb{F}
        \right) \cdot \vb{n}\right\rangle_{\partial \Omega^m} \\
        &-\left\langle \hat{\eta}\left(
        T \hat{\vb{F}}
        + \left(\hat{T} - T\right)\vb{F}
        \right) \cdot \vb{n}\right\rangle_{\partial \Omega^m} \\
        & -\left\langle \hat{q}_w\left(
        \mu_w \hat{\vb{F}}
        + \left(\hat{\mu}_w - \mu_w\right)\vb{F}\right) \cdot \vb{n}\right\rangle_{\partial \Omega^m}.
        \end{split}
        \label{eq:energy-dg-integral}
    \end{equation}
    Next, we show that the contributions from the no-flux boundaries and the element interfaces are each $\leq 0$. 
    
    At the no-flux boundaries $\Gamma \subset \partial \Omega$: $\hat{\vb{F}} = 0$, $\hat{T}=T$, $\hat{\mu}_w = \mu_w$, and $\hat{G} = G + \gamma \left(\vb{F}\cdot\vb{n}^+\right)$ therefore the contribution from this boundary to \eqref{eq:energy-dg-integral} becomes $-\left\langle \gamma\left(\vb{F}\cdot \vb{n}^+\right)^2\right\rangle_{\Gamma} \leq 0$.

    Finally, we show that the numerical energy flux jump is $\leq 0$ at the element interfaces. Considering the 3 components of the energy flux separately and using $\jump{ab} = \average{a}\jump{b} + \jump{a}\average{b}$ together with (58) yields
    \begin{equation}
        -\hat{q}_w\jump{\mu_w \hat{\vb{F}} + \hat{\mu}_w\vb{F} - \mu_w\vb{F}}\cdot \vb{n}^+ = -\hat{q}_w\jump{\mu_w \average{F} + \average{\mu_w}\vb{F} - \mu_w\vb{F}}\cdot \vb{n}^+ = 0,
    \end{equation}
    \begin{equation}
        -\hat{\eta}\jump{T \hat{F} + \hat{T}\vb{F} - T\vb{F}}\cdot \vb{n}^+ = 0,
    \end{equation}
    \begin{equation}
        -\jump{G \hat{F} + \hat{G}\vb{F} - G\vb{F}}\cdot \vb{n}^+ = -\alpha \left(\jump{F}\cdot \vb{n}^+\right)^2 \leq 0.
    \end{equation}
    Therefore $\mathcal{E}_t \leq 0$ for $\alpha,\gamma \geq 0$, and $\mathcal{E}_t = 0$ for for $\alpha=\gamma=0$.
\end{proof}

\subsection{Tracer variance stability}
Here we show that the entropy and water variances, 
$\mathcal{Y}= \sum_m{\mathcal{Y}^m}$ and $\mathcal{Z}=\sum_m{\mathcal{Z}^m}$, are stable in our semi-discrete method. As before let $\mathcal{Y}^m=\left\langle\tfrac{1}{2}\rho \eta^2 \right\rangle_{\Omega^m}$ and $\mathcal{Z}^m=\left\langle\tfrac{1}{2}\rho q_w^2 \right\rangle_{\Omega^m}$ be the element entropy and water variances. The result is contained in the following theorem.

\begin{theorem}
    The discrete method \eqref{eq:dg-velocity}-\eqref{eq:ent-flux} is tracer variance stable $\mathcal{Y}_t,\mathcal{Z}_t\leq 0$ when $\beta \geq 0$, and tracer variance conserving $\mathcal{Y}_t=\mathcal{Z}_t=0$ when $\beta = 0$.
\end{theorem}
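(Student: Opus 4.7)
The plan is to mirror the continuous tracer variance proof of Theorem \ref{theo:conservation_of_variance} at the discrete level, following closely the structure of the preceding energy stability argument. By symmetry between entropy and water it suffices to handle a generic tracer $\gamma\in\{\eta,q_w\}$, whose semi-discrete evolution equation has the same split form; I focus on $\mathcal{Y}^m$ and then invoke identical steps for $\mathcal{Z}^m$ at the end.

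First I would apply the chain rule in time to each element variance, writing
\[ \mathcal{Y}^m_t = \left\langle \eta, (\rho\eta)_t\right\rangle_{\Omega^m} - \left\langle \tfrac{1}{2}\eta^2, \rho_t\right\rangle_{\Omega^m}, \]
which is the discrete counterpart of the identity used in the continuous proof. Substituting $\phi=\eta$ into the entropy equation \eqref{eq:dg-ent} and $\phi=\tfrac{1}{2}\eta^2$ into the density equation \eqref{eq:dg-depth} replaces these time derivatives by volume and boundary quadratures involving the split spatial operator. The two $\tfrac{1}{2}\eta^2\,\nabla_d\!\cdot\!\vb{F}$ contributions cancel directly, and a single application of the SBP identity \eqref{eq:sbp} rewrites $\left\langle \eta, \nabla_d\!\cdot\!(\eta\vb{F})\right\rangle_{\Omega^m}$ as $-\left\langle \eta\vb{F}, \nabla_d \eta\right\rangle_{\Omega^m} + \left\langle \eta^2\vb{F}, \vb{n}\right\rangle_{\partial\Omega^m}$, which cancels the remaining $\tfrac{1}{2}\eta\vb{F}\!\cdot\!\nabla_d\eta$ volume term. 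This cancellation is exactly where the split form \eqref{eq:dg-ent} pays off: since SBP is a discrete integration by parts but \emph{not} a discrete product or chain rule, the splitting is what makes the volume contributions collapse.

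After the volume terms vanish, collecting the surviving boundary contributions reduces the element variance budget to a single numerical flux
\[ \mathcal{Y}^m_t = \left\langle \left(\tfrac{1}{2}\eta^2 - \eta\hat{\eta}\right)\hat{\vb{F}}, \vb{n}\right\rangle_{\partial\Omega^m}. \]
Summing over elements then separates into contributions from no-flux boundaries, where $\hat{\vb{F}}=0$ and the contribution vanishes, and from interior interfaces. On each interior face the net contribution from the two adjacent elements equals the jump $\jump{(\tfrac{1}{2}\eta^2 - \eta\hat{\eta})\hat{\vb{F}}}\cdot\vb{n}^+$; using $\jump{ab}=\average{a}\jump{b}+\jump{a}\average{b}$, and in particular $\jump{\tfrac{1}{2}\eta^2}=\average{\eta}\jump{\eta}$, together with the single-valuedness of $\hat{\vb{F}}$ and $\hat{\eta}$ on the face, reduces this jump to $\left(\average{\eta}-\hat{\eta}\right)\jump{\eta}\,\hat{\vb{F}}\!\cdot\!\vb{n}^+$. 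Substituting the flux definition \eqref{eq:ent-flux}, $\hat{\eta}=\average{\eta}+\beta\jump{\eta}\,\mathrm{sign}(\hat{\vb{F}}\!\cdot\!\vb{n}^+)$, yields $-\beta\jump{\eta}^2\,|\hat{\vb{F}}\!\cdot\!\vb{n}^+|$, which is $\leq 0$ for $\beta\geq 0$ and identically zero for $\beta=0$. Repeating verbatim with $q_w$ in place of $\eta$ gives the corresponding statement for $\mathcal{Z}$.

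The main obstacle, and the technical heart of the argument, will be verifying that the three volume terms arising from the split advection operator really do collapse via a single SBP application with no implicit use of a discrete product rule, and that the residual boundary quantity $\tfrac{1}{2}\eta^2-\eta\hat{\eta}$ combined with the sign-biased flux $\hat{\eta}$ assembles into the perfect square $\beta\jump{\eta}^2|\hat{\vb{F}}\!\cdot\!\vb{n}^+|$ with the correct sign; once this bookkeeping goes through, the stated stability and conservation conclusions are immediate.
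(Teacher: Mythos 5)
Your proposal is correct and follows essentially the same route as the paper's proof: the discrete chain rule in time, substitution of $\eta$ and $\tfrac{1}{2}\eta^2$ as test functions, one SBP application to collapse the split volume terms, and the jump identity $\jump{\tfrac{1}{2}\eta^2}=\average{\eta}\jump{\eta}$ with the sign-biased flux to get $-\beta\jump{\eta}^2\,\lvert\hat{\vb{F}}\cdot\vb{n}^+\rvert\leq 0$. No substantive differences.
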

\begin{proof}
    Following  the continuous analysis \eqref{eq:cont-ent-cons} and applying SBP yields the entropy variance flux
\begin{equation}
\begin{split}
    \mathcal{Y}^m_t &= \left\langle \eta, \left(\rho\eta\right)_t \right\rangle_{\Omega^m} - \left\langle \tfrac{1}{2}\eta^2, \rho_t \right\rangle_{\Omega^m} \\
    &= -\tfrac{1}{2}\left\langle \eta, \nabla_d\cdot\left(\eta\vb{F}\right)\right\rangle_{\Omega^m}
    -\tfrac{1}{2}\left\langle \eta\vb{F}, \nabla_d\eta\right\rangle_{\Omega^m}
    -\left\langle \eta, \left(\hat{\eta}\hat{\vb{F}} -\tfrac{1}{2}\eta\vb{F}-\tfrac{1}{2}\eta\hat{\vb{F}}\right)\cdot \vb{n}\right\rangle_{\partial\Omega^m} \\
    &= -\left\langle \eta, \left(\hat{\eta} -\tfrac{1}{2}\eta\right)\hat{\vb{F}}\cdot \vb{n}\right\rangle_{\partial\Omega^m}.
    \end{split}
    \label{eq:disc-entr-var-integral}
\end{equation}
As before we prove stability by showing that both the domain boundary and element interface conditions are $\leq 0$.

On the domain boundary $\hat{\vb{F}}\cdot \vb{n} = 0$, therefore the domain boundary contribution to \eqref{eq:disc-entr-var-integral} is $0$.

We now show that the entropy variance flux jump is negative semi-definite on the element interfaces. Using $\jump{a^2} = 2\average{a}\jump{a}$ and the definition of $\hat{\eta}$ \eqref{eq:ent-flux}
\begin{equation}
    -\jump{\eta\hat{\eta} - \tfrac{1}{2}\eta^2}\hat{\vb{F}}\cdot\vb{n}^+ = -\beta \jump{\eta}^2\text{sign}\left(\hat{\vb{F}}\cdot\vb{n}^+\right)\hat{\vb{F}}\cdot\vb{n}^+ \leq 0.
\end{equation}
Therefore $\mathcal{Y}_t \leq 0$ when $\beta \geq 0$, and $\mathcal{Y}_t = 0$ when $\beta = 0$. The same analysis shows that $\mathcal{Z}_t \leq 0$ when $\beta \geq 0$, and $\mathcal{Z}_t = 0$ when $\beta = 0$.
\end{proof}

\section{Numerical results}

 In this section we verify our theoretical results with numerical experiments using our DG-SEM discretisation. All experiments use 3rd order polynomials and a strong stablity preserving (SSP) RK3 time integrator \cite{shu1988efficient}. We also limit the water mass fraction with the positivity preserving limiter introduced in \cite{zhang2010maximum,liu1996nonoscillatory}. This limiter is high order accurate and does not affect the accuracy of our scheme, but it may introduce energy and water variance conservation errors.

In the following test cases, we  start with hydrostatic profiles (specified below) and apply the perturbations of the form
\begin{equation}
    \rho' = \begin{cases}
        \Delta \rho \cos{\left(\pi\frac{r} {2R}\right)}, \; for\ r \leq R \\
        0, \; r > R
    \end{cases}
\end{equation}
where $R$ is the bubble radius, $\Delta \rho$ is the perturbation strength, \newline$r = \sqrt{(x-x_c)^2 + (y-y_c)^2}$, and $(x_c, y_c)$ is the bubble centre.

\subsection{Test case 1: two phase bubble in a neutral atmosphere}

In this first test case, we verify our method by simulating the two phase (water vapour and liquid) bubble configuration described in \cite{bryan2002benchmark} and \cite{thuburn2017use}. This test simulates a neutrally buoyant profile in a $10$ km $\times 10$ km domain perturbed by a warm bubble with parameters: $\Delta \rho = \tfrac{1}{150}$, $R=2$ km, and $(x_c, y_c)=$ ($0$ km, $2$ km). The atmospheric profile in this test case has a constant water mass ratio $q_w=0.02$, a surface pressure of $1000$ \text{hPa}, a surface density of $1.2$ $\text{kgm}^{-3}$, and a constant specific entropy calculated from the surface values $\eta=2540.6$ K.

We present results using: $128^2$ 3rd order elements, a timestep of 0.03s, kinetic-energy dissipation and upwinding $\alpha=\gamma=\beta=0.5$, and the ice phase turned off. Figures \ref{fig:bf-entropy} and \ref{fig:bf-vapour} show the specific entropy and water mass ratios at times 0s, 800s, 1000s, and 1200s. Previous studies \cite{bryan2002benchmark,thuburn2017use,bowen2022consistent1}  verify their results by comparing plots of the solution at 1000s. The height of our bubble is in agreement with these previous studies, but our simulation develops Rayleigh-Taylor instabilities at the leading and trailing edges of the bubble which are absent from other works. We hypothesize that this is because the previous studies include artificial viscosity which is absent from our method, enabling our simulation to develop fine scale features.
\begin{figure}[!hbtp]
\begin{center}
	\includegraphics[width=0.8\textwidth]{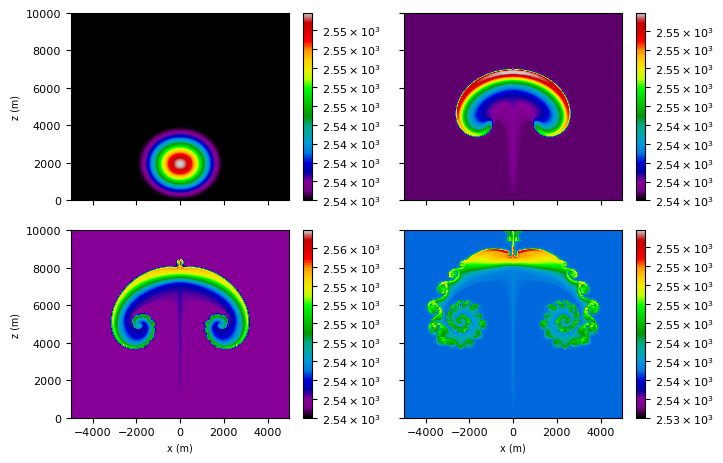}
 \caption{Specific entropy of the Bryan-Fritsch test case at $t=0\text{s},800\text{s},1000\text{s},1200\text{s}$. }
  \label{fig:bf-entropy}
\end{center}
\end{figure}

\begin{figure}[!hbtp]
\begin{center}
	\includegraphics[width=0.8\textwidth]{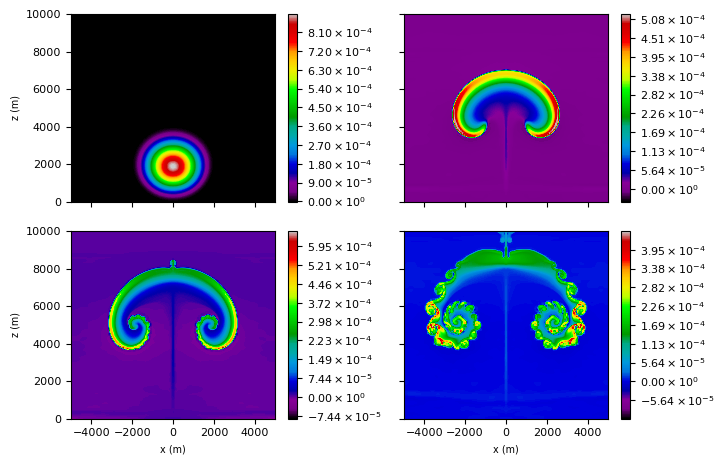}
 \caption{Water vapour mass fraction of the Bryan-Fritsch test case at $t=0\text{s},800\text{s},1000\text{s},1200\text{s}$.}
  \label{fig:bf-vapour}
\end{center}
\end{figure}



\subsection{Test 2: three phase bubble in a an unstable atmosphere}

In the second test case, we verify our full three phase configuration by simulating an unstable sub-saturated background profile perturbed by the same bubble as the first test case. The density and pressure profiles are obtained from a neutrally stable dry profile with a constant dry potential temperature of $300K$ and a surface pressure of $1000hPa$. Given the pressure and density profiles, we then set the water profile by specifying a constant relative humidity of 0.95, therefore no ice or liquid should be present at the beginning of the simulation.

We present results using  $64^2$ 3rd order elements, a timestep of 0.06s, kinetic-energy dissipation and upwinding $\alpha=\gamma=\beta=0.5$, and all three phases. Figure \ref{fig:ice-entropy} shows the entropy perturbation with respect to the background profile. As expected the bubble rises faster in this unstable  profile compared to the neutrally-stable profile in the first test case. Figure \ref{fig:ice-ice} shows the ice mass fraction which is initially zero as the air is sub-saturated, the  bubble then begins to rise transporting moist air upwards and causing ice to form. Importantly, no ice is observed in the lower part of the domain where the temperature is above freezing. Figure \ref{fig:ice-conservation} shows the relative errors of energy, entropy variance, and water variance, verifying that the stability of our method.
\begin{figure}[!hbtp]
\begin{center}
	\includegraphics[width=0.8\textwidth]{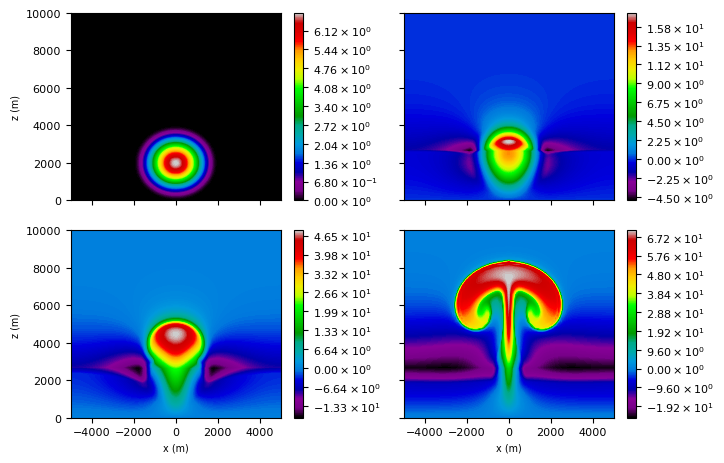}
 \caption{Specific entropy perturbation of the three phase bubble test case at $t=0\text{s},200\text{s},400\text{s},600\text{s}$.}
  \label{fig:ice-entropy}
\end{center}
\end{figure}
\begin{figure}[!hbtp]
\begin{center}
	\includegraphics[width=0.8\textwidth]{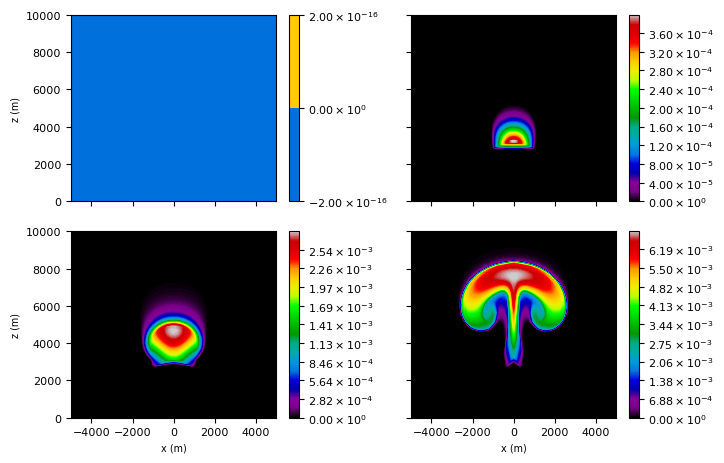}
 \caption{Ice mass fraction of the three phase bubble test case at $t=0\text{s},200\text{s},400\text{s},600\text{s}$.}
  \label{fig:ice-ice}
\end{center}
\end{figure}
\begin{figure}[!hbtp]
\begin{center}
	\includegraphics[width=0.8\textwidth]{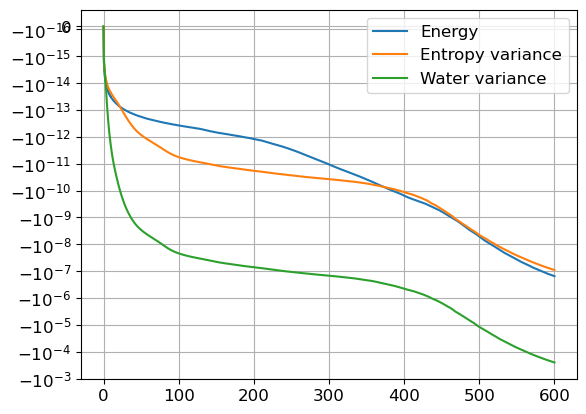}
 \caption{Conservation errors of the three phase bubble test case.}
  \label{fig:ice-conservation}
\end{center}
\end{figure}

\subsection{Conservation}

We verify the semi-discrete conservation properties of our method by running a low resolution version of test case 2 with $8^2$ 3rd order elements and decreasing timesteps. Figure \ref{fig:ice-conservation-convergence} shows the conservation errors. The energy and water variance both decrease at approximately 3rd order, verifying our theoretical results. The entropy variance initially decreases but plateaus at a relative error of $ 10^{-13}$, we hypothesize that this is due to round-off error.
\begin{figure}[!hbtp]\begin{center}
	\includegraphics[width=0.8\textwidth]{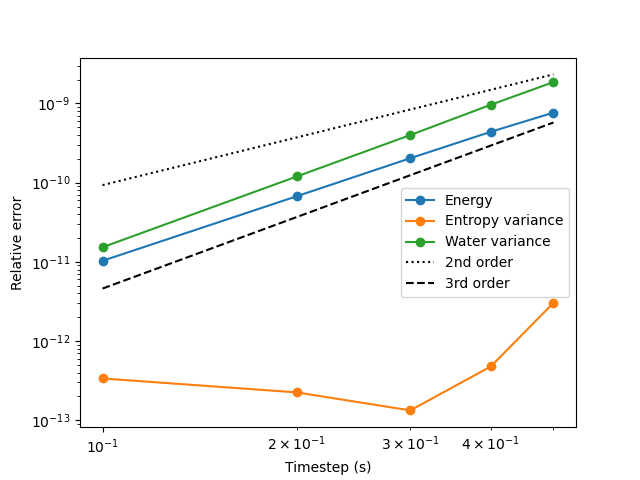}
 \caption{Semi-discrete conservation convergence test of the three phase bubble test at $t=600\text{s}$.}
  \label{fig:ice-conservation-convergence}
\end{center}
\end{figure}

\subsection{Convergence}

To verify the convergence of the numerical method we modify the first test case to use a smooth bubble perturbation of the form $\rho' = \Delta \rho \exp{-\left(\frac{2r} {R}\right)^2}$ and run the simulation until $t=400\text{s}$, stopping the simulation shortly before a shock develops. Figure \ref{fig:convergence} shows the normalised $L^2$ errors compared to a high resolution reference simulation with $128\times 128$ 3rd order elements, showing that our method converges at the expected 3rd order accuracy for the 3rd order time integrator used.

\begin{figure}[!hbtp]\begin{center}
	\includegraphics[width=0.8\textwidth]{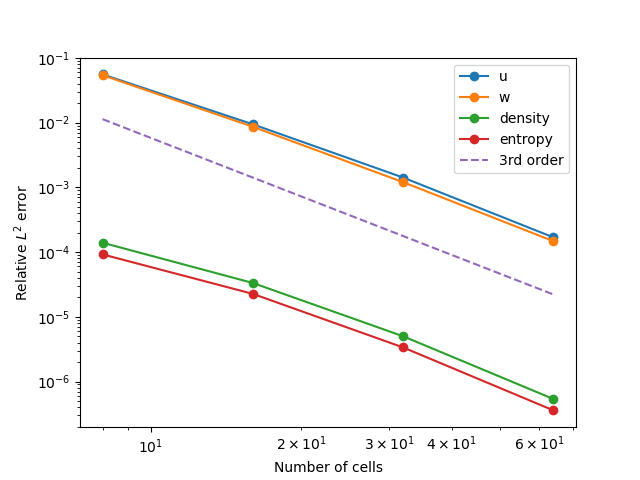}
 \caption{Convergence of a smooth two-phase bubble at $t=400\text{s}$ compared to a high resolution reference solution.} 
  \label{fig:convergence}
\end{center}
\end{figure}

Importantly, in this test case the air is saturated and is everywhere a vapour-liquid mixture. Therefore there are no transitions between vapour only, vapour-liquid, vapour-ice, and vapour-liquid ice states which introduce discontinuities in the temperature and chemical potential gradients and reduce the order of accuracy of the model. 

\subsection{Stability}

To demonstrate the importance of discrete tracer variance stability for ensuring numerical stability, we run test case 1 with $\Delta \rho = \frac{1}{5}$ using two different configurations: our standard DG method \eqref{eq:dg-velocity}-\eqref{eq:ent-flux} (Method 1), and an alternate method (Method 2) that is energy stable but not tracer variance stable. Method 2 is identical to our method except that it excludes the term splitting in the volume integrals of \eqref{eq:dg-velocity}, \eqref{eq:dg-ent}, and \eqref{eq:dg-wat}.

We run both simulations with a resolution of $8\times 8$ elements, no upwinding or kinetic energy dissipation at the element interfaces $\alpha=\beta=0$, and with a boundary condition penalty of $\gamma=0.5$. Figure \ref{fig:stability} shows the energy and water variance conservation errors for the two methods. While both methods are energy stable with similar energy conservation errors, they have distinctly different water variance conservation errors. At around $600\text{s}$  the water variance of Method 2 begins increasing exponentially until the model crashes at approximately $1000{s}$. In contrast our method (Method 1) has a stable water variance and runs stably for the whole duration, demonstrating the increased numerical ability associated with discrete tracer variance stability.

\begin{figure}[!hbtp]\begin{center}
	\includegraphics[width=0.8\textwidth]{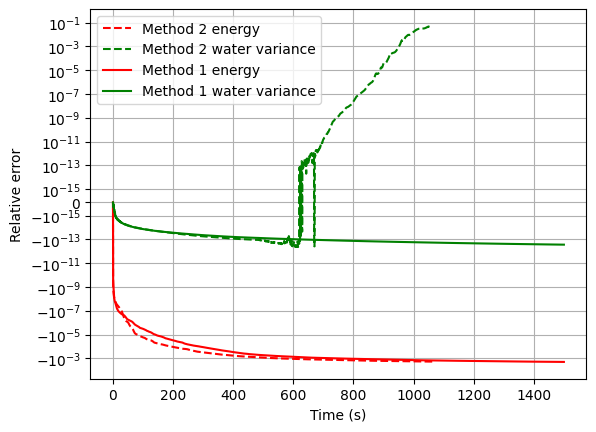}
 \caption{Conservation errors of our method (Method 1) compared to an energy stable but not tracer variance stable method (Method 2).} 
  \label{fig:stability}
\end{center}
\end{figure}

\section{Conclusion}

In this work we developed a general framework for constructing numerical methods of the moist compressible Euler equations with excellent conservation and stability properties using only SBP operators. We then constructed a novel DG-SEM discretisation for the equilibrium moist compressible Euler equations using this framework and proved that our method discretely conserves mass, water, and entropy; and is semi-discretely energy and tracer variance stable/conserving. These theoretical results were then verified with numerical experiments.

To the best of our knowledge, this is the first energy conserving/stable discretisation of the moist compressible Euler equations. Additionally, our approach is tracer variance stable, enabling our method to be numerically stable without applying artificial viscosity. We achieve these results by discretising the operator-split skew-symmetric formulation \cite{ricardo2024dg,ricardo2024entropy} with SBP operators and ensuring thermodynamic consistency through the use of the internal energy potential.

This approach has the potential to improve the energy budget of global climate models and therefore more faithfully model climate dynamics over long time scales. Future work will focus on including this approach in global atmospheric model with non-equilibrium physics.

\section*{Declaration of competing interest}

The authors declare that they have no known competing financial interests or personal relationships that could have appeared to influence the work reported in this paper.

\section*{Data availability}

The code to generate, analyze, and plot the datasets used in this study can be found in the associated reproducibility repository https://github.com/kieranricardo/moist-euler-dg.

\clearpage
\bibliographystyle{unsrtnat}
\bibliography{main}  

\appendix

\section{Thermodynamic equilibrium}

This section presents the approach used to solve for the equilibrium water mass fractions. First, the entropy and Gibbs functions are formulated as functions of temperature, density, and mass fractions. Next, we show how given entropy, density and the water mass fraction, the temperature and internal energy can be calculated. Finally, our optimisation method for finding the equilibrium mass fractions is presented.

\subsection{Entropy and Gibbs functions}

For each component the temperature can be derived from the internal energy components \eqref{eq:dry-pot}-\eqref{eq:ice-pot} through the relationship $T = \frac{\delta u^j}{\delta \eta^j}$. Re-arranging gives entropy as a function of temperature, density, and mass fractions:
\begin{equation}
    \eta^d = C_v^d\log\left(T\right) - R^d\log\left(q^d\rho\right),
    \label{eq:dry-ent}
\end{equation}
\begin{equation}
	\eta^v = C_p^v + \frac{L^s_{00}}{T_0} + C_v^v\log\left(\frac{T}{T_0}\right) - R^v\log\left(\frac{q^v\rho}{\rho^v_0}\right),
 \label{eq:vap-ent}
\end{equation}
\begin{equation}
	\eta^l = C^l + \frac{L^f_{00}}{T_0} + C^l\log\left(\frac{T}{T_0}\right),
    \label{eq:liq-ent}
\end{equation}
\begin{equation}
	\eta^i = C^i + C^i\log\left(\frac{T}{T_0}\right).
    \label{eq:ice-ent}
\end{equation}

The Gibbs functions, defined as $g^j :=  u^j + \frac{p^j}{q^j \rho} - \eta^j T$, can then be expressed as
\begin{equation}
    g^v = u^v + \frac{p^v}{q^v\rho}-\eta^vT=-C_v^vT\log\left(\frac{T}{T_0}\right) + R^vT\log\left(\frac{q^v\rho}{\rho^v_0}\right) + L^s_{00} \left(1 - \frac{T}{T_0}\right),
\end{equation}
\begin{equation}
	g^l = u^l -\eta^lT=-C^lT\log\left(\frac{T}{T_0}\right)  + L^f_{00} \left(1 - \frac{T}{T_0}\right),
\end{equation}
\begin{equation}
	g^i = u^i -\eta^iT=-C^iT\log\left(\frac{T}{T_0}\right).
\end{equation}

\subsection{Combined thermodynamic potential}
In thermodynamic equilibrium all components have the same temperature $T$. Using \eqref{eq:dry-ent}-\eqref{eq:ice-ent} the total entropy can then be expressed as
\begin{equation}
    \eta = \sum_j{q^j \eta^j} = C^*_v \log{\left(T\right)} - R^* \log{\left(\rho\right)} -\sum_{j\in\{d, v\}}{q^j R^j\log{\left(q^j\right)}} + \sum_{j}{q^j \eta^j_r},
\end{equation}

where $C_v^*=\sum_j{q^jC^j_v}$, $R^* = q^d R^d + q^v R^v$, and reference specific entropies $\eta^j_r$ are given by $\eta^j_r = \eta^j \left(T=\rho=q^j = 1\right)$. This enables us to solve for $T$, and hence $u=C_v^*T + q^v L^s_{00} + q^l L^f_{00}$, as a function of total density, specific entropy, and water mass fractions
\begin{equation}
\begin{split}
    u(\rho, \eta, q^d, q^v, q^l, q^i) &=  C_v^* \left[\exp\left(\eta-\sum_{j}{q^j\eta^j_r}\right)
	\rho^{R^*}\left(q^d\right)^{R^dq^d}\left(q^v\right)^{R^vq^v}
	\right]^{\frac{1}{C_v^*}} \\ &+ q^vL^s_{00} + q^lL^f_{00}.
 \end{split}
\end{equation}
We also note that the partial derivatives of $u$ w.r.t. to the water mass fractions is simply the Gibbs function of that component
\begin{equation}
    \frac{\delta u}{\delta q^j} = g^j.
\end{equation}


\subsection{Optimisation}

Here we present our method for finding the equilibrium mass fractions by minimising the specific internal energy. Specifically, we solve the following convex problem:
\begin{equation}
    \text{argmin}_{q^l, q^i}\; u(\rho, \eta, q^w-q^l-q^i, q^l, q^i),
\end{equation}
subject to 
\begin{equation}
    q^l, q^i \geq 0,\quad q^w - q^l - q^i \geq 0,
\end{equation}
where we have eliminated $q^v$ through the constraint $q^w = q^v + q^l + q^i$. The partial derivatives of $u$ are then
\begin{equation}
    \frac{\partial u}{\partial q^l} = g^l - g^v,
\end{equation}
\begin{equation}
    \frac{\partial u}{\partial q^i} = g^i - g^v.
\end{equation}

We then solve this problem by:
\begin{enumerate}
    \item Check whether the triple point $q^v, q^l, q^i >0$ is the solution. This is algebraically simple to verify and presented below.
    \item Check whether vapour only $q^l=q^i=0$ is the solution.
    \item Newton solve for $g^l - g^v = 0$ assuming $q^i=0$. Check whether this is the solution.
    \item Newton solve for $g^i - g^v = 0$ assuming $q^l=0$. Check whether this is the solution.
\end{enumerate}
The $q^w - q^l - q^i \geq 0$ constraint is enforced by limiting $q^v$ to a minimum value of $10^{-15}$ at each Newton iteration. Limiting $q^v$ may prevent the true local minima from being found and therefore introduce energy conservation errors, but the small limiting value ensures that this error is negligible. We did not find it necessary to explicitly enforce the $q^l, q^i \geq 0$ constraints.

\subsubsection{Triple point}

Water and ice co-existence $\implies g^i = g^l = 0 \implies T = T_0 = 0^\circ C$. If water vapour is present it implies $g^v = 0 \implies q^v = \frac{\rho^v_0}{\rho}$. $q^i$ and $q^l$ can then be solved from the entropy equations A.3-A.4 and $q_v = q^v + q^l + q^i$.









\section{Constants}
Here we use the same constants as \cite{bowen2022consistent1}.
\begin{table}[H]
\begin{tabular}{|l|l|l|}
\hline
Constant   & Description                                                  & Value                       \\ \hline
$R^d$      & Gas constant of dry air                                      & $287.0 \text{\;Jkg}^{-1}\text{K}^{-1}$      \\ \hline
$R^v$      & Gas constant of water vapour                                 & $461.0 \text{\;Jkg}^{-1}\text{K}^{-1}$      \\ \hline
$C^d_v$    & Specific heat capacity of dry air, at constant volume        & $717.0 \text{\;Jkg}^{-1}\text{K}^{-1}$      \\ \hline
$C^v_v$    & Specific heat capacity of water vapour, at constant volume   & $1424.0 \text{\;Jkg}^{-1}\text{K}^{-1}$     \\ \hline
$C^l$      & Specific heat capacity of liquid water                       & $4186.0 \text{\;Jkg}^{-1}\text{K}^{-1}$     \\ \hline
$C^I$      & Specific heat capacity of ice                                & $2106.0 \text{\;Jkg}^{-1}\text{K}^{-1}$     \\ \hline
$L^s_{00}$ & Specific latent heat of sublimation at $T=p=0$               & $2773633.85 \text{\;Jkg}^{-1}\text{K}^{-1}$ \\ \hline
$L^f_{00}$ & Specific latent heat of freezing at $T=p=0$                  & $902152.0 \text{\;Jkg}^{-1}\text{K}^{-1}$   \\ \hline
$\rho^v_0$ & Reference water vapour density                               & $0.004853 \text{\;kgm}^{-3}$         \\ \hline
$T_0$      & Reference temperature                                        & $273.15 \text{K}$                  \\ \hline
$C_p^v$    & Specific heat capacity of water vapour, at constant pressure & $C^v_v + R^v$               \\ \hline
\end{tabular}
\end{table}

\end{document}